\theoremstyle{plain}
	  \newtheorem{lem}{Lemma}
\theoremstyle{remark}
\newtheorem{rem}{Remark}
\theoremstyle{plain}
	  \newtheorem{prop}{Proposition}
  \newenvironment{proof}[1][\proofname]{\par
    \normalfont\topsep6\p@\@plus6\p@\relax
    \trivlist
    \itemindent\parindent
    \item[\hskip\labelsep
          \scshape
      #1]\ignorespaces
  }{%
    \endtrivlist\@endpefalse
  }
  \providecommand{\proofname}{Proof}
\renewcommand{\maketag@@@}[1]{\hbox{\m@th\normalsize\normalfont#1}}
\newcommand{\newac}{\newacronym}
\newcommand{\ac}{\gls}
\newcommand{\acpl}{\glspl}
\begin{document}
\title{Fundamental CRB-Rate Tradeoff in Multi-Antenna ISAC Systems with Information
Multicasting and Multi-Target Sensing}
\author{Zixiang Ren, Yunfei Peng, Xianxin Song, Yuan Fang, Ling Qiu, Liang
Liu, Derrick Wing Kwan Ng, and Jie Xu \thanks{Part of this paper has been presented at the 2022 IEEE Global Communications
Conference workshop, 4-8 December 2022, Rio de Janeiro, Brazil \cite{ren2022fundamental}.}\thanks{Z. Ren is with Key Laboratory of Wireless-Optical Communications,
Chinese Academy of Sciences, School of Information Science and Technology,
University of Science and Technology of China, and the Future Network
of Intelligence Institute (FNii), The Chinese University of Hong Kong
(Shenzhen), Shenzhen, China (e-mail: rzx66@mail.ustc.edu.cn).}\thanks{Y. Peng, X. Song, Y. Fang, and J. Xu are with the School of Science
and Engineering (SSE) and the FNii, The Chinese University of Hong
Kong (Shenzhen), Shenzhen, China (e-mail: 117010214@link.cuhk.edu.cn,
xianxinsong@link.cuhk.edu.cn, fangyuan@cuhk.edu.cn, xujie@cuhk.edu.cn). }\thanks{L. Qiu is with Key Laboratory of Wireless-Optical Communications,
Chinese Academy of Sciences, School of Information Science and Technology,
University of Science and Technology of China (e-mail: lqiu@ustc.edu.cn).}\thanks{L. Liu is with the Department of Electronic and Information Engineering,
The Hong Kong Polytechnic University, Hong Kong SAR, China (e-mails:
liang-eie.liu@polyu.edu.hk). }\thanks{D. W. K. Ng is with the University of New South Wales, Sydney, NSW
2052, Australia (e-mail: w.k.ng@unsw.edu.au).}\thanks{L. Qiu and J. Xu are the corresponding authors.}\vspace{-2cm}
}
\maketitle
\begin{abstract}
This paper investigates the performance tradeoff for a multi-antenna
integrated sensing and communication (ISAC) system with simultaneous
information multicasting and multi-target sensing, in which a multi-antenna
base station (BS) sends the common information messages to a set of
single-antenna communication users (CUs) and estimates the parameters
of multiple sensing targets based on the echo signals concurrently.
We consider two target sensing scenarios without and with prior target
knowledge at the BS, in which the BS is interested in estimating the
complete multi-target response matrix and the target reflection coefficients/angles,
respectively. First, we consider the capacity-achieving transmission
and characterize the fundamental tradeoff between the achievable rate
and the multi-target estimation Cramér-Rao bound (CRB) accordingly.
To this end, we design the optimal transmit signal covariance matrix
at the BS to minimize the estimation CRB for each of the two scenarios,
subject to the minimum multicast rate requirement and the maximum
transmit power constraint. It is shown that the optimal covariance
matrix consists of two parts for ISAC and dedicated sensing, respectively.
Next, we consider the transmit beamforming designs, in which the BS
sends one information beam together with multiple \textit{a-priori}
known dedicated sensing beams for effective ISAC and each CU can cancel
the interference caused by the sensing signals. By exploiting the
successive convex approximation (SCA) technique, we develop efficient
algorithms to obtain the joint information and sensing beamforming
solutions to the resultant rate-constrained CRB minimization problems.
Finally, we provide numerical results to validate the CRB-rate (C-R)
tradeoff achieved by our proposed designs, as compared to two benchmark
schemes, namely the isotropic transmission and the joint beamforming
without sensing interference cancellation. It is shown that the proposed
optimal transmit covariance solution achieves much better C-R performance
than the benchmark schemes and the proposed joint beamforming with
sensing interference cancellation performs close to the optimal transmit
covariance solution when the number of CUs is small. We also conduct
simulations to show the practical estimation performance achieved
by our proposed designs, by considering randomly generated information
signals and practical estimators.\vspace{-0.5cm}
\end{abstract}

\begin{IEEEkeywords}
Integrated sensing and communications (ISAC), multicast channel, multi-target
sensing, Cramér-Rao bound (CRB), transmit beamforming, optimization.\vspace{-0.5cm}
 
\end{IEEEkeywords}

\section{Introduction}

Future beyond fifth-generation (B5G) and sixth-generation (6G) wireless
networks are envisioned to support abundant intelligent applications
such as the metaverse, auto-driving, smart homes, and smart cities,
which demand ultra-high-capacity, ultra-low-latency, and ultra-high
reliable communications, as well as high-accuracy and high-resolution
sensing. Towards this end, \ac{isac} has emerged as a promising
technology for realizing B5G/6G, which enables the dual use of radio
signals and wireless network infrastructures for providing both sensing
and communication services simultaneously \cite{liu2021integrated}.
By seamlessly coordinating and integrating sensing and communications
\cite{LiuMasJ18}, ISAC is expected to significantly enhance both
communication and sensing performances, while improving the spectrum
utilization efficiency and the cost efficiency. As such, ISAC has
recently attracted tremendous research interests in both academia
and industry \cite{cui2021integrating}, motivating extensive studies
on ISAC from different research perspectives such as unified ISAC
waveforms \cite{liu2021integrated}, multi-antenna ISAC \cite{LiuMasJ18},
networked ISAC \cite{RahLusJ20,huang2022coordinated}, wideband ISAC
\cite{liu2021integrated}, and intelligent reflecting surface (IRS)-enabled
ISAC \cite{song2022intelligent,sankar2022beamforming}.

Among various enabling techniques, multiple-input multiple-output
(MIMO) has been recognized as an important solution to enhance the
performance of wireless networks, by equipping multiple antennas at
the wireless transceivers, e.g., \acpl{bs}. In particular, MIMO
can provide potential spatial multiplexing and diversity gains to
increase the communication rate and reliability \cite{heath2018foundations},
while offering spatial and waveform diversity gains to enhance the
sensing accuracy and resolution \cite{haimovich2007mimo,LiStoJ07}.
Indeed, how to optimize the transmit waveform and beamforming design
based on proper sensing and communication performance metrics is a
key technical challenge to be tackled for multi-antenna ISAC. On the
one hand, for communication, the \ac{sinr} and data rate are widely
adopted as performance measures. On the other hand, for sensing, various
performance measures may apply depending on the specific sensing tasks.
For instance, when the sensing estimation tasks for target detection
and tracking \cite{liu2022survey} are considered, the estimation
mutual information \cite{li2022framework}, radar sensing SINR \cite{zhang2022accelerating},
transmit beampattern \cite{LiuHuangNirJ20}, and \ac{crb} \cite{liu2021cramer}
are widely applied, among which the beampattern and \ac{crb}-based
designs are most promising. The basic idea of beampattern-based design
is to match the transmitted beampattern with a pre-defined sensing
beampattern such that the transmit signal beams are focused towards
the desired target directions and those leaked to the other undesired
directions are suppressed as possible. There have been extensive existing
works considering the waveform/beamforming design based on the transmit
beampattern. For example, the authors in \cite{LiuZhouJ18,LiuHuangNirJ20,hua2021optimal}
studied the downlink ISAC over broadcast channels by considering different
setups, in which the transmit beamforming was optimized to properly
balance the tradeoff between the transmit beampattern for sensing
and the received \ac{sinr} for communication. These designs were
then extended to the case of downlink non-orthogonal multiple access
(NOMA) in \cite{NOMAISAC}. However, aiming to mimic the transmit
beampattern may not be able to capture the quality of sensing directly
and thus may lead to sacrificed sensing performances.

In contrast to the transmit beampattern, the \ac{crb} has been
recognized as a practically relevant and analytically tractable metric
to directly measure the sensing performance limits for estimating
target parameters. In general, the CRB provides the lower bound of
variance for any unbiased estimators, which can be derived by exploiting
the Fisher information matrix to measure the amount of information
that the data provides about the parameter of interest \cite{liu2022survey,liu2021cramer,xiong2022flowing,Haocheng2022}.
As a result, directly optimizing the CRB is a viable new approach
that is suitable for designing ISAC systems, which not only facilitates
the characterization of the fundamental \ac{crb}-rate (C-R) performance
tradeoff, but also leads to potentially enhanced parameter estimation
performances. For instance, the authors in \cite{liu2021cramer} studied
the multiuser ISAC over a broadcast channel with one single-target,
in which the transmit beamforming was optimized to minimize the estimation
CRB, subject to the individual SINR (or equivalently rate) constraints
at multiple \acpl{cu}. Furthermore, \cite{xiong2022flowing} and
\cite{Haocheng2022} investigated the C-R tradeoff in a point-to-point
MIMO ISAC system with one CU and one target. In particular, the authors
in \cite{Haocheng2022} optimized the transmit covariance to characterize
the complete Pareto boundary of the C-R region for the MIMO ISAC system
in two specific scenarios with point and extended target models, respectively,
by considering the radar coherent processing interval to be sufficiently
long. As for \cite{xiong2022flowing}, the authors revealed the C-R
tradeoff for a more general case with finite radar coherent processing
intervals. In addition, \cite{yin2022rate} investigated the C-R tradeoff
for ISAC in multi-antenna broadcast channels, in which the emerging
rate-splitting multiple access (RSMA) technique was exploited to further
improve the ISAC performance. 

With recent advancements in webcast and content broadcasting applications,
it has become crucial for future B5G/6G wireless networks to effectively
support simultaneous data transmission to multiple users through multicast
channels \cite{jindal2006capacity,sidiropoulos2006transmit}. Additionally,
the ability to conduct multi-target sensing and tracking plays a vital
role in various multicast applications, such as surveillance, object
detection, and autonomous vehicles \cite{cui2021integrating}. Consequently,
supporting efficient ISAC over multi-antenna systems during such information
multicasting has become an increasingly important problem. This problem,
however, is particularly challenging. This is due to the fact that
the transmission principle for multicast channels significantly differs
from that for broadcast channels, thus rendering the prior ISAC designs
\cite{liu2021cramer,yin2022rate,Haocheng2022,xiong2022flowing} over
broadcast channels not applicable. To illustrate this, we consider
a multiple-input single-output (MISO) multicast channel as an example.
First, it has been demonstrated in \cite{jindal2006capacity} that
achieving the MISO multicast capacity typically requires a high-rank
transmit covariance matrix for transmitting common messages. This
stands in contrast to the MISO broadcast channel, where the capacity
is achievable by employing a rank-one covariance matrix for each user's
individual message along with dirty paper coding \cite{weingarten2006capacity}.
Next, it has been shown in \cite{sidiropoulos2006transmit} that in
MISO multicast channels, the transmit beamforming optimization problem
for minimizing the transmit power while ensuring individual users'
\ac{snr} requirements is an NP-hard problem. This is distinct from
the MISO broadcast channels, in which the transmit beamforming design
for SINR-constrained power minimization has been proven to be a convex
problem \cite{gershman2010convex,schubert2004solution}. Due to such
significant differences, new design approaches and principles are
needed for ISAC over multicast channels, thus motivating our investigation
in this work.

This paper investigates the multi-antenna ISAC over a multicast channel
with multi-target sensing, in which a multi-antenna \ac{bs} sends
common messages to a set of single-antenna \acpl{cu} and simultaneously
exploits the received echo signals to estimate the parameters of multiple
sensing targets. In particular, we consider two scenarios for multi-target
sensing without and with prior target knowledge at the BS, namely
Scenario I and Scenario II, respectively. In practice, Scenarios I
and II may correspond to the target detection and target tracking
stages for multi-target sensing, respectively.\footnote{In general, the BS often lacks prior knowledge of the targets before
transmission. Therefore, an initial target detection stage is required,
in which the ISAC signal transmission is designed to estimate the
complete multi-target response matrix for extracting multi-target
parameters. With the obtained target parameters at hand, a subsequent
target tracking stage is implemented, in which the ISAC signal transmission
is designed by leveraging such a piece of prior information to facilitate
the tracking of these targets or the estimation of their reflection
coefficients and angles.} The main results of our work are summarized as follows. 
\begin{itemize}
\item First, we consider the capacity-achieving transmission, based on which
we characterize the fundamental performance tradeoff between the estimation
CRB for multi-target sensing versus the channel capacity for information
multicasting in the two scenarios, which generalizes the target sensing
study compared with \cite{ren2022fundamental}. To this end, we minimize
the corresponding estimation CRB for each scenario, by optimizing
the transmit covariance matrix at the BS, subject to the minimum multicast
rate constraint and the maximum transmit power constraint. We obtain
the globally optimal solutions to the two rate-constrained CRB minimization
problems by applying advanced convex optimization techniques. For
both scenarios, the optimal transmit covariance is shown to consist
of two signal parts for ISAC and dedicated sensing, respectively.
\item Next, to facilitate the  implementation, we present new joint information
and sensing beamforming designs for efficient ISAC. In such designs,
the BS sends one common information beam together with multiple dedicated
a-priori known sensing beams such that each CU is capable of canceling
the interference caused by these sensing beams. For each of the two
scenarios, we develop a computationally efficient algorithm by adopting
the \ac{sca} technique to obtain a high-quality joint beamforming
solution for minimizing the multi-target estimation CRB while ensuring
the minimum multicast rate requirement. 
\item Finally, we provide numerical results to validate the C-R tradeoff
performance achieved by our proposed designs, as compared to two benchmark
schemes, namely the isotropic transmission and the joint beamforming
without sensing interference cancellation. It is shown that for each
scenario, the proposed designs significantly outperform the two benchmark
schemes and the proposed joint beamforming with sensing interference
cancellation achieves performance close to the upper bound achieved
by the optimal transmit covariance when the number of CUs is small.
We also perform simulations to assess the estimation performance using
randomly generated information signals and practical estimators. The
results clearly demonstrate that our CRB minimization designs significantly
reduce the root mean squared error (RMSE) of estimation compared to
conventional beampattern-based designs. This highlights the superior
performance and feasibility of our proposed designs in the practical
ISAC implementation and provides more comprehensive performance comparison
than \cite{ren2022fundamental}.
\end{itemize}

The remainder of this paper is organized as follows. Section II introduces
the multicast multi-target ISAC system model and formulates the multicast-rate-constrained
CRB minimization problems for Scenarios I and II when the BS does
not and does have the prior multi-target knowledge, respectively.
Sections III and IV develop the optimal solutions to the formulated
multicast-rate-constrained CRB minimization problems for Scenarios
I and II, respectively. Section V proposes  joint information and
sensing beamforming designs with sensing interference pre-cancellation
at the CUs. Section VI provides numerical results. Finally, Section
VII concludes this paper.

\textit{Notations}: Vectors and matrices are denoted by bold lower-
and upper-case letters, respectively. $\mathbb{C}^{N\times M}$ denotes
the space of $N\times M$ complex matrices. $\boldsymbol{I}$ and
$\boldsymbol{0}$ represent an identity matrix and an all-zero matrix
with appropriate dimensions, respectively. For a square matrix $\boldsymbol{A}$,
$\textrm{tr}(\boldsymbol{A})$ denotes its trace and $\boldsymbol{A}\succeq\boldsymbol{0}$
means that $\boldsymbol{A}$ is positive semi-definite. For a complex
arbitrary-size matrix $\boldsymbol{B}$, $\boldsymbol{B}[i,j]$, $\textrm{rank}(\boldsymbol{B})$,
$\boldsymbol{B}^{T}$, $\boldsymbol{B}^{H}$, and $\boldsymbol{B}^{c}$
denote its $(i,j)$-th element, rank, transpose, conjugate transpose,
and complex conjugate, respectively, and $\mathrm{vec}(\boldsymbol{B})$
denotes the vectorization of $\boldsymbol{B}$. For a vector $\boldsymbol{a}$,
$\boldsymbol{a}[i]$ denotes its $i$-th element. $\mathbb{E}(\cdot)$
denotes the statistical expectation. $\|\cdot\|$ denotes the Euclidean
norm of a vector. $|\cdot|$, $\mathrm{Re}(\cdot)$, and $\mathrm{Im}(\cdot)$
denote the absolute value, the real component, and the imaginary component
of a complex entry. $\mathcal{CN}(\boldsymbol{x},\boldsymbol{Y})$
denotes a \ac{cscg} random vector with mean vector $\boldsymbol{x}$
and covariance matrix $\boldsymbol{Y}$. $\boldsymbol{A}\otimes\boldsymbol{B}$
and $\boldsymbol{A}\odot\boldsymbol{B}$ represent the Kronecker product
and Hadamard product of two matrices $\boldsymbol{A}$ and $\boldsymbol{B}$,
respectively.\vspace{-0.5cm}

\section{System Model and Problem Formulation}

\begin{figure}
\vspace{-0.8cm}
\includegraphics[scale=0.35]{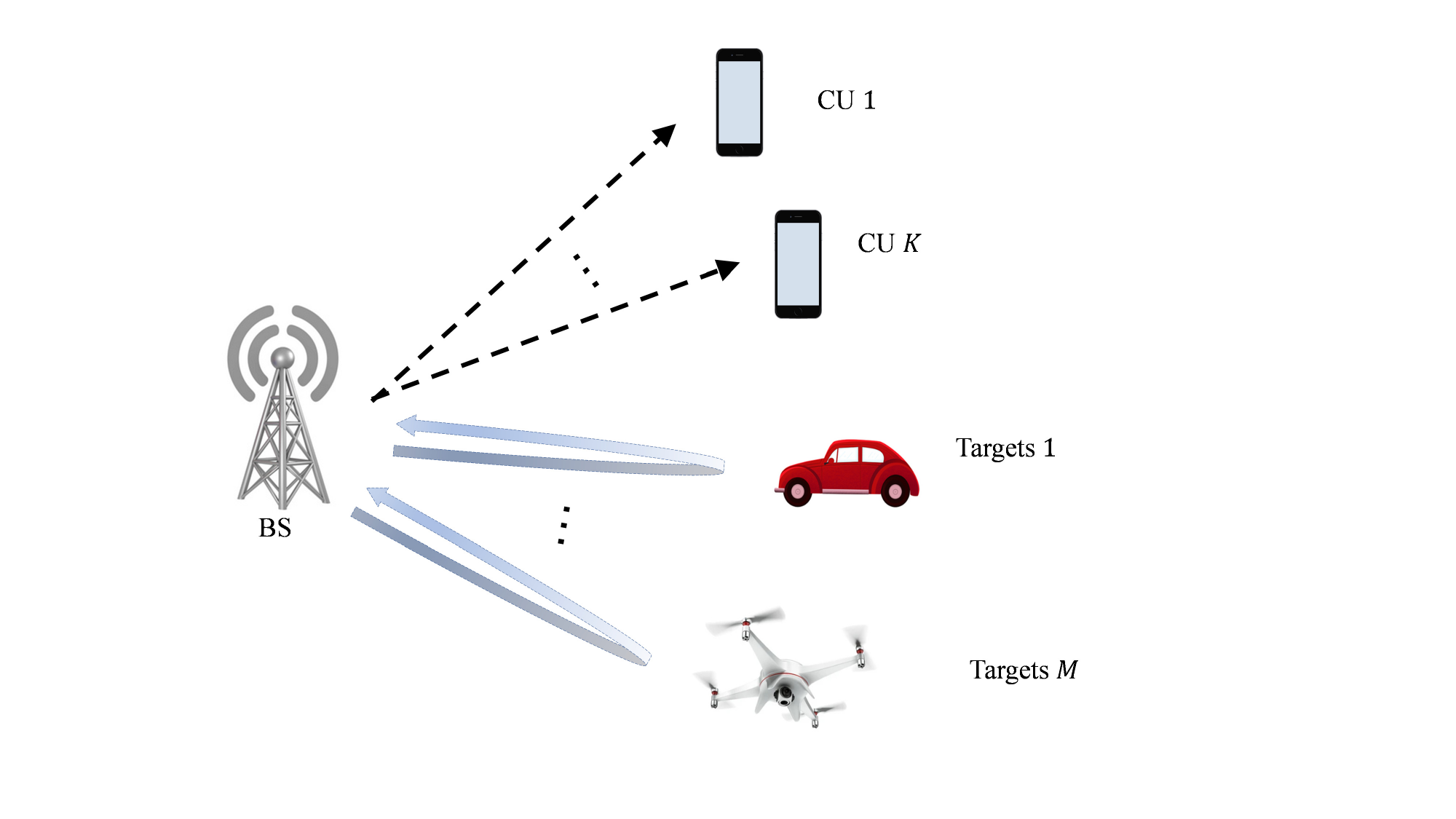}\centering\vspace{-0.8cm}
\caption{\label{fig:1-1}Illustration of the multi-antenna ISAC system for
simultaneous information multicasting and multi-target sensing. }
\vspace{-0.8cm}
\end{figure}

As shown in Fig. \ref{fig:1-1}, we consider a multi-antenna ISAC
system for information multicasting and multi-target sensing, which
consists of a multi-antenna \ac{bs} equipped with $N_{t}$ transmit
antennas and $N_{r}$ receive antennas, $K$ single-antenna \acpl{cu},
and $M$ sensing targets. Let $\mathcal{K}\overset{\bigtriangleup}{=}\{1,2,\ldots,K\}$
denote the set of CUs and $\mathcal{M}\overset{\bigtriangleup}{=}\{1,2,\ldots,M\}$
denote that of sensing targets. The BS sends common information signals
to all the CUs and exploits the received echo signals to sense the
$M$ targets simultaneously. In particular, we focus on the ISAC period
consisting of $L$ symbols. In this paper, we focus on a quasi-static
environment, where the communication and sensing channels are assumed
to remain invariant during the ISAC period, similar to \cite{liu2021cramer,li2007range}.
Let $\boldsymbol{x}(n)\in\mathbb{C}^{N_{t}\times1}$ denote the transmitted
ISAC signal by the BS in symbol $n\in\mathcal{L}\triangleq\{1,2,\ldots,L\}$.
For achieving the multicast capacity, we consider that the transmitted
signal $\boldsymbol{x}(n)$ is a \ac{cscg} random vector with zero
mean and covariance $\mathbb{E}[\boldsymbol{x}(n)\boldsymbol{x}^{H}(n)]=\boldsymbol{S}_{x}\succeq\boldsymbol{0}$,
i.e., $\boldsymbol{x}(n)\sim\mathcal{CN}(\boldsymbol{0},\boldsymbol{S}_{x}),\forall n$.
Supposing that the BS is subject to a maximum transmit power budget
$P$, we have \vspace{-0.2cm}
\begin{equation}
\mathrm{tr}(\boldsymbol{S}_{x})\leq P.\label{eq:Average power}
\end{equation}
\vspace{-1cm}

\subsection{Multicasting Communication Model}

First, we consider the multicast channel for communication. Let $\boldsymbol{h}_{k}\in\mathbb{C}^{N_{t}\times1}$
denote the channel vector from the BS to CU $k\in\mathcal{K}$. The
received signal at the receiver of \ac{cu} $k\in\mathcal{K}$ is
given by \cite{jindal2006capacity}\vspace{-0.5cm}
\begin{equation}
y_{k}(n)=\boldsymbol{h}_{k}^{H}\boldsymbol{x}(n)+z_{k}(n),\forall n\in\mathcal{L},\label{eq:Received com signal}
\end{equation}
\vspace{-0.2cm}
where $z_{k}(n)$ denotes the noise at the receiver of CU $k$ that
is a CSCG random variable with zero mean and variance $\sigma^{2}$,
i.e., $z_{k}(n)\sim\mathcal{CN}(0,\sigma^{2}),\forall k\in\mathcal{K}$.
Based on the received signal in \eqref{eq:Received com signal}, the
received \ac{snr} at \ac{cu} $k\in\mathcal{K}$ is\footnote{Since all the receivers require the same common message from one single
BS, there is no multi-user interference. On the other hand, the co-channel
interference from other radio transmitters in the environment can
be included in $z_{k}(n)$ such that $\sigma^{2}$ captures the impacts
caused by background interference.}\vspace{-0.2cm}
\begin{equation}
\gamma_{k}(\boldsymbol{S}_{x})=\mathbb{E}\bigg(\frac{\big|\boldsymbol{h}_{k}^{H}\boldsymbol{x}(n)\big|^{2}}{\big|z_{k}(n)\big|^{2}}\bigg)=\frac{\boldsymbol{h}_{k}^{H}\boldsymbol{S}_{x}\boldsymbol{h}_{k}}{\sigma^{2}}.\label{eq:SNR for com}
\end{equation}
\vspace{-0.2cm}
Accordingly, the achievable rate of the multicast channel with a given
transmit covariance $\boldsymbol{S}_{x}$ is given by \cite{jindal2006capacity,sidiropoulos2006transmit}\vspace{-0.2cm}
\begin{equation}
R(\boldsymbol{S}_{x})=\underset{k\in\mathcal{K}}{\min}\Big\{\log_{2}\Big(1+\frac{\boldsymbol{h}_{k}^{H}\boldsymbol{S}_{x}\boldsymbol{h}_{k}}{\sigma^{2}}\Big)\Big\}.\label{eq:rate}
\end{equation}
\vspace{-0.2cm}
As a result, the capacity of the multicast channel is defined as the
maximum achievable rate, given by \cite{jindal2006capacity,sidiropoulos2006transmit}\vspace{-0.2cm}
\begin{equation}
C_{0}=\underset{\boldsymbol{S}_{x}\succeq\boldsymbol{0},\mathrm{tr}(\boldsymbol{S}_{x})\leq P}{\max}R(\boldsymbol{S}_{x}).
\end{equation}
\vspace{-0.2cm}

\subsection{Multi-target Sensing Model}

Next, we consider the multi-target radar sensing, in which the BS
adopts its $N_{r}$ receive antennas for receiving the echo signals
to estimate the parameters of the $M$ sensing targets. Suppose that
these $M$ targets are located in the same range bin of the far-field
region of the BS. In this case, the received target echo signals at
the BS in symbol $n\in\mathcal{L}$ is expressed as\footnote{Similar as in prior works on ISAC and radar sensing \cite{liu2021cramer,Haocheng2022},
we ignore the self-interference due to the signal transmission in
(5), as the sensing receiver can handle such an issue properly \cite{zeng2018joint}.
This is different from the communication receivers, for which the
self-interference might be a severe issue. Furthermore, similar to
prior works on MIMO radar sensing \cite{LiStoJ07,li2007range} and
MIMO ISAC \cite{liu2021integrated}, we consider the estimation of
target angles and complex coefficients based on the echo signals in
(6), in which the Doppler effects are omitted.}\vspace{-0.2cm}
\begin{equation}
\boldsymbol{y}(n)=\sum_{m=1}^{M}\beta_{m}\boldsymbol{a}_{r}^{c}(\theta_{m})\boldsymbol{a}_{t}^{H}(\theta_{m})\boldsymbol{x}(n)+\boldsymbol{\boldsymbol{z}}(n),\label{eq:Received signal for sensing}
\end{equation}
where $\beta_{m}\in\mathbb{C}$ denotes the complex coefficient of
target $m\in\mathcal{M}$ whose amplitude captures the round-trip
path loss and is proportional to its radar cross section (RCS) \cite{LiuMasJ202,li2007range},
$\theta_{m}$ denotes the angle of departure (AoD)/angle of arrival
(AoA) of target $m\in\mathcal{M}$, $\boldsymbol{a}_{r}(\theta)\in\mathbb{C}^{N_{r}\times1}$
and $\boldsymbol{a}_{t}(\theta)\in\mathbb{C}^{N_{t}\times1}$ denote
the receive and transmit steering vectors with angle $\theta$, respectively,
and $\boldsymbol{z}(n)\sim\mathcal{CN}(\boldsymbol{0},\sigma_{r}^{2}\boldsymbol{I})$
is the noise at the BS receiver with $\sigma_{r}^{2}$ being the noise
power at each receive antenna. Let $\boldsymbol{X}=[\boldsymbol{x}(1),\dots,\boldsymbol{x}(L)]$
denote the transmitted signal over the $L$ symbols. By assuming that
$L$ is sufficiently large, the sample covariance matrix of $\boldsymbol{X}$
can be approximated as the statistical covariance matrix\footnote{We will show that the approximation is sufficiently accurate for practical
values of $L$ (e.g., $L\ge64$) in the numerical results in Section
VI. } $\boldsymbol{S}_{x}$, i.e.,\vspace{-0.2cm}
\begin{equation}
\frac{1}{L}\boldsymbol{X}\boldsymbol{X}^{H}\approx\boldsymbol{S}_{x}.\label{eqn:app}
\end{equation}
\vspace{-0.2cm}
To facilitate the derivation of the \ac{crb} matrix for estimating
the parameters of targets, we rewrite \eqref{eq:Received signal for sensing}
as\vspace{-0.2cm}
\begin{equation}
\boldsymbol{Y}=\boldsymbol{A}_{r}^{c}\boldsymbol{B}\boldsymbol{A}_{t}^{H}\boldsymbol{X}+\boldsymbol{Z},\label{eq:Total received signal}
\end{equation}
\vspace{-0.2cm}
where $\boldsymbol{Y}=[\boldsymbol{y}(1),...,\boldsymbol{y}(L)]$,
$\boldsymbol{Z}=[\boldsymbol{z}(1),...,\boldsymbol{z}(L)]$, $\boldsymbol{A}_{r}=[\boldsymbol{a}_{r}(\theta_{1})$,
$\boldsymbol{a}_{r}(\theta_{2}),\dots,\boldsymbol{a}_{r}(\theta_{M})]$,
$\boldsymbol{A}_{t}=[\boldsymbol{a}_{t}(\theta_{1}),\boldsymbol{a}_{t}(\theta_{2}),\dots,\boldsymbol{a}_{t}(\theta_{M})]$,
$\boldsymbol{\theta}=[\theta_{1},\theta_{2},\dots,\theta_{M}]^{T},\boldsymbol{\beta}=[\beta_{1},\beta_{2},\dots,\beta_{M}]^{T}$,
and $\boldsymbol{B}=\mathrm{diag}(\boldsymbol{\beta})$. Let $\boldsymbol{G}=\boldsymbol{A}_{r}^{c}\boldsymbol{B}\boldsymbol{A}_{t}^{H}$
denote the multi-target response matrix. Accordingly, we vectorize
\eqref{eq:Total received signal} as\vspace{-0.2cm}
\begin{equation}
\tilde{\boldsymbol{y}}=(\boldsymbol{X}^{T}\otimes\boldsymbol{I}_{N_{r}})\tilde{\boldsymbol{g}}+\tilde{\boldsymbol{z}},\label{eq:vector form}
\end{equation}
\vspace{-0.2cm}
where $\tilde{\boldsymbol{y}}=\mathrm{vec}(\boldsymbol{Y})\in\mathbb{C}^{N_{r}L\times1}$,
$\tilde{\boldsymbol{g}}=\mathrm{vec}(\boldsymbol{G})\in\mathbb{C}^{N_{r}N_{t}\times1},$
and $\tilde{\boldsymbol{z}}=\mathrm{vec}(\boldsymbol{Z})\in\mathbb{C}^{N_{r}L\times1}$.
It follows from \eqref{eq:vector form} that the received signal vector
$\tilde{\boldsymbol{r}}$ is a CSCG random vector with mean $(\boldsymbol{X}^{T}\otimes\boldsymbol{I})\tilde{\boldsymbol{g}}$
and covariance $\sigma_{r}^{2}\boldsymbol{I}$, i.e., $\tilde{\boldsymbol{y}}\sim\mathcal{CN}\big((\boldsymbol{X}^{T}\otimes\boldsymbol{I})\tilde{\boldsymbol{g}},\sigma_{r}^{2}\boldsymbol{I}\big)$.

In particular, we consider two different sensing scenarios, namely
Scenario I and Scenario II, in which the BS does not and does have
\textit{a-prior} knowledge about the multiple targets such that the
BS aims to estimate the complete multi-target response matrix $\boldsymbol{G}=\boldsymbol{A}_{r}^{c}\boldsymbol{B}\boldsymbol{A}_{t}^{H}$
and the targets' coefficients/angles (i.e., $\{\beta_{m}\}_{m=1}^{M}$
and $\{\theta_{m}\}_{m=1}^{M}$), respectively. In practice, Scenario
I may correspond to the target detection phase, in which the BS tries
to identify the number of targets and obtain their parameters based
on matrix $\boldsymbol{G}$ via spatial spectrum algorithms such as
multiple signal classification (MUSIC). On the other hand, Scenario
II may correspond to the target tracking phases, in which the BS needs
to track the $M$ targets by estimating their coefficients/angles,
with their initial parameters known in advance \cite{huang2022joint,LiuMasJ20},
respectively.

\subsubsection{CRB for Estimating $\boldsymbol{G}$ in Scenario I}

First, we consider Scenario I, in which the BS lacks prior knowledge
of the targets in the initial target detection stage \cite{StoPETLiJ07,liu2021integrated}.
In this scenario, the BS is interested in estimating the complete
multi-target response matrix $\boldsymbol{G}=\boldsymbol{A}_{r}^{c}\boldsymbol{B}\boldsymbol{A}_{t}^{H}$
based on the received signal $\boldsymbol{Y}$ in \eqref{eq:Total received signal},
or equivalently estimating $\tilde{\boldsymbol{g}}$ based on $\tilde{\boldsymbol{r}}$
in \eqref{eq:vector form}. After obtaining the estimate of $\boldsymbol{G}$,
the BS can further implement spatial spectrum algorithms such as MUSIC\cite{schmidt1986multiple}
to further extract the angle information of these targets. In this
scenario, the BS needs to estimate $N_{r}N_{t}$ complex parameters
in $\tilde{\boldsymbol{g}}$ or $\boldsymbol{G}$. Based on the complex
linear model in \eqref{eq:vector form}, it has been established in
\cite{kay1993fundamentals} that the \ac{crb} matrix for estimating
$\tilde{\boldsymbol{g}}$ is\vspace{-0.2cm}
\begin{align}
\boldsymbol{C}_{1}\negthickspace= & \big((\boldsymbol{X}^{T}\otimes\boldsymbol{I}_{N_{r}})^{H}(\sigma_{r}^{2}\boldsymbol{I})^{-1}(\boldsymbol{X}^{T}\otimes\boldsymbol{I}_{N_{r}})\big)^{-1}\negthickspace\negthickspace=\sigma_{r}^{2}\big(\boldsymbol{X}^{c}\boldsymbol{X}^{T}\otimes\boldsymbol{I}_{N_{r}}\big)^{-1}\negthickspace\overset{(\mathrm{a})}{=}\frac{\sigma_{r}^{2}}{L}\big(\boldsymbol{S}_{x}^{T}\otimes\boldsymbol{I}_{N_{r}}\big)^{-1},\label{eq:CRB1}
\end{align}
\vspace{-0.2cm}
where (a) follows from the approximation in \eqref{eqn:app}. For
facilitating the sensing performance optimization, scalar functions
of the CRB matrix are normally adopted as the performance metrics,
some examples include trace, determinant, and minimum eigenvalue \eqref{eq:CRB1}.
In this paper, we adopt its trace as the scalar performance metric,
i.e.\vspace{-0.2cm}
\begin{equation}
\mathrm{CRB}_{1}(\boldsymbol{S}_{x})=\frac{N_{r}\sigma_{r}^{2}}{L}\mathrm{tr}(\boldsymbol{S}_{x}^{-1}).\label{eq:CRB}
\end{equation}
\vspace{-0.2cm}
Intuitively, $\mathrm{CRB}_{1}(\boldsymbol{S}_{x})$ corresponds to
the sum of the CRBs for estimating the elements in $\boldsymbol{G}$.

\subsubsection{CRB for Estimating $\boldsymbol{\theta}$ and $\boldsymbol{\beta}$
in Scenario II}

Next, we consider Scenario II, in which the BS has a priori knowledge
of the multiple targets (e.g., from the target detection stage), and
thus is interested in estimating the target coefficients $\boldsymbol{\beta}$
and angles $\boldsymbol{\theta}$ as unknown parameters. This may
correspond to the target tracking stage in practice \cite{StoPETLiJ07,liu2021integrated}.
Let $\boldsymbol{\beta}_{R}=[\mathrm{Re}(\beta_{1}),\mathrm{Re(}\beta_{2}),\dots,\mathrm{Re(}\beta_{M})]^{T}$
and $\boldsymbol{\beta}_{I}=[\mathrm{Im}(\beta_{1}),\mathrm{Im(}\beta_{2}),\dots,\mathrm{Im(}\beta_{M})]^{T}$
denote the real and imaginary parts of $\boldsymbol{\beta}$. Accordingly,
we have a total of $3M$ real parameters to be estimated, given by
$\boldsymbol{\xi}=[\boldsymbol{\theta}^{T},\boldsymbol{\text{\ensuremath{\beta}}}_{R}^{T},\boldsymbol{\beta}_{I}^{T}]$$^{T}$.
To facilitate the CRB derivation, we have the following lemma \cite{kay1993fundamentals}. 
\begin{lem}
\label{lem:1-1}\textup{\cite{kay1993fundamentals} Consider the estimation
of real-valued parameters $\boldsymbol{\varsigma}\in\mathbb{R}^{T\times1}$
based on the CSCG data vector $\tilde{\boldsymbol{v}}\in\mathbb{C}^{N\times1}$,
i.e., $\tilde{\boldsymbol{v}}\sim\mathcal{CN}(\tilde{\boldsymbol{\mu}}(\boldsymbol{\varsigma}),\boldsymbol{C}_{\tilde{\boldsymbol{v}}})$,
where the covariance $\boldsymbol{C}_{\tilde{\boldsymbol{v}}}$ is
independent from parameters $\boldsymbol{\varsigma}$. The Fisher
information matrix (FIM) for estimating $\boldsymbol{\varsigma}$
is given by $\boldsymbol{F}_{\boldsymbol{\varsigma}}$, with}\vspace{-0.2cm}
\textup{
\begin{equation}
\boldsymbol{F}_{\boldsymbol{\varsigma}}[i,j]=2\mathrm{Re}\Big(\frac{\partial\tilde{\boldsymbol{\mu}}^{H}(\boldsymbol{\varsigma})}{\partial\boldsymbol{\varsigma}[i]}\boldsymbol{C}_{\tilde{\boldsymbol{v}}}^{-1}\frac{\partial\tilde{\boldsymbol{\mu}}(\boldsymbol{\varsigma})}{\partial\boldsymbol{\varsigma}[j]}\Big),\forall i,j\in\{1,\ldots,T\},
\end{equation}
}\vspace{-0.2cm}
\textup{where $\frac{\partial}{\partial}(\cdot)$ denotes the partial
derivative operator.}
\end{lem}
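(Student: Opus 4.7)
The plan is to derive the stated FIM formula directly from its defining expression
\begin{equation}
\boldsymbol{F}_{\boldsymbol{\varsigma}}[i,j]=-\mathbb{E}\Big[\frac{\partial^{2}\log p(\tilde{\boldsymbol{v}};\boldsymbol{\varsigma})}{\partial\boldsymbol{\varsigma}[i]\,\partial\boldsymbol{\varsigma}[j]}\Big],
\end{equation}
using the fact that the log-likelihood of a CSCG vector with mean $\tilde{\boldsymbol{\mu}}(\boldsymbol{\varsigma})$ and $\boldsymbol{\varsigma}$-independent covariance $\boldsymbol{C}_{\tilde{\boldsymbol{v}}}$ reads
\begin{equation}
\log p(\tilde{\boldsymbol{v}};\boldsymbol{\varsigma})=-N\log\pi-\log\det(\boldsymbol{C}_{\tilde{\boldsymbol{v}}})-(\tilde{\boldsymbol{v}}-\tilde{\boldsymbol{\mu}}(\boldsymbol{\varsigma}))^{H}\boldsymbol{C}_{\tilde{\boldsymbol{v}}}^{-1}(\tilde{\boldsymbol{v}}-\tilde{\boldsymbol{\mu}}(\boldsymbol{\varsigma})).
\end{equation}
Only the last term depends on $\boldsymbol{\varsigma}$, so I would focus on this quadratic form and treat its Hessian with respect to the real coordinates $\boldsymbol{\varsigma}[i]$.

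First, I would differentiate once in $\boldsymbol{\varsigma}[i]$. Writing $\tilde{\boldsymbol{\mu}}_{i}\triangleq\partial\tilde{\boldsymbol{\mu}}/\partial\boldsymbol{\varsigma}[i]$, the product rule yields
\begin{equation}
\frac{\partial\log p}{\partial\boldsymbol{\varsigma}[i]}=\tilde{\boldsymbol{\mu}}_{i}^{H}\boldsymbol{C}_{\tilde{\boldsymbol{v}}}^{-1}(\tilde{\boldsymbol{v}}-\tilde{\boldsymbol{\mu}})+(\tilde{\boldsymbol{v}}-\tilde{\boldsymbol{\mu}})^{H}\boldsymbol{C}_{\tilde{\boldsymbol{v}}}^{-1}\tilde{\boldsymbol{\mu}}_{i}=2\mathrm{Re}\big(\tilde{\boldsymbol{\mu}}_{i}^{H}\boldsymbol{C}_{\tilde{\boldsymbol{v}}}^{-1}(\tilde{\boldsymbol{v}}-\tilde{\boldsymbol{\mu}})\big),
\end{equation}
where the two summands are complex conjugates because $\boldsymbol{\varsigma}[i]$ is real and $\boldsymbol{C}_{\tilde{\boldsymbol{v}}}^{-1}$ is Hermitian. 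Differentiating a second time with respect to $\boldsymbol{\varsigma}[j]$ produces two kinds of terms: one set is proportional to $\tilde{\boldsymbol{v}}-\tilde{\boldsymbol{\mu}}$ through second derivatives $\tilde{\boldsymbol{\mu}}_{ij}$ of the mean, and the other gathers the cross product $\tilde{\boldsymbol{\mu}}_{i}^{H}\boldsymbol{C}_{\tilde{\boldsymbol{v}}}^{-1}\tilde{\boldsymbol{\mu}}_{j}$ (and its conjugate).

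Next, I would take the expectation. Since $\mathbb{E}[\tilde{\boldsymbol{v}}-\tilde{\boldsymbol{\mu}}]=\boldsymbol{0}$, all terms containing the residual vanish, leaving only
\begin{equation}
\mathbb{E}\Big[\frac{\partial^{2}\log p}{\partial\boldsymbol{\varsigma}[i]\,\partial\boldsymbol{\varsigma}[j]}\Big]=-\tilde{\boldsymbol{\mu}}_{j}^{H}\boldsymbol{C}_{\tilde{\boldsymbol{v}}}^{-1}\tilde{\boldsymbol{\mu}}_{i}-\tilde{\boldsymbol{\mu}}_{i}^{H}\boldsymbol{C}_{\tilde{\boldsymbol{v}}}^{-1}\tilde{\boldsymbol{\mu}}_{j}=-2\mathrm{Re}\big(\tilde{\boldsymbol{\mu}}_{i}^{H}\boldsymbol{C}_{\tilde{\boldsymbol{v}}}^{-1}\tilde{\boldsymbol{\mu}}_{j}\big).
\end{equation}
Negating and using the definition of the FIM yields the claimed formula.

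The only subtlety, and the step I would treat most carefully, is verifying that the two conjugate contributions really combine into a real $2\mathrm{Re}(\cdot)$; this relies on $\boldsymbol{\varsigma}$ being real (so $\tilde{\boldsymbol{\mu}}_{i}$ is a well-defined complex vector rather than a Wirtinger derivative) and on the Hermiticity of $\boldsymbol{C}_{\tilde{\boldsymbol{v}}}^{-1}$. Beyond that, the argument is essentially a bookkeeping exercise, and the entire derivation can alternatively be cited from \cite{kay1993fundamentals} without repeating the calculation.
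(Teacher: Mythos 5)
Your derivation is correct and matches the standard argument: the paper offers no proof of its own for this lemma, citing \cite{kay1993fundamentals}, and your computation of the negative expected Hessian of the CSCG log-likelihood (with the residual terms vanishing in expectation and the two conjugate contributions combining into $2\mathrm{Re}(\cdot)$) is exactly the textbook derivation found in that reference. Nothing further is needed.
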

Notice that based on \eqref{eq:vector form}, we have $\tilde{\boldsymbol{y}}\sim\mathcal{CN}\big(\mathrm{vec}(\boldsymbol{A}_{r}^{c}\boldsymbol{B}\boldsymbol{A}_{t}^{H}\boldsymbol{X}),\sigma_{r}^{2}\boldsymbol{I}\big)$.
Let $\boldsymbol{F}_{\boldsymbol{\xi}}\in\mathbb{R}^{3M\times3M}$
denote the FIM for estimating $\boldsymbol{\xi}$. Based on Lemma
1, the element in the $i$-th row and $j$-th column of $\boldsymbol{F}_{\boldsymbol{\xi}}$,
$\forall i,j\in\{1,\ldots,3M\}$, is given by\vspace{-0.2cm}
\begin{eqnarray}
\boldsymbol{F}_{\boldsymbol{\xi}}[i,j] & = & \frac{2}{\sigma_{r}^{2}}\mathrm{Re}\Big(\frac{\partial\mathrm{vec}(\boldsymbol{A}_{r}^{c}\boldsymbol{B}\boldsymbol{A}_{t}^{H}\boldsymbol{X})^{H}}{\partial\boldsymbol{\boldsymbol{\xi}}[i]}\cdot\frac{\partial\mathrm{vec}(\boldsymbol{A}_{r}^{c}\boldsymbol{B}\boldsymbol{A}_{t}^{H}\boldsymbol{X})}{\partial\boldsymbol{\boldsymbol{\xi}}[j]}\Big),
\end{eqnarray}
\vspace{-0.2cm}
where $\dot{\boldsymbol{A}_{r}}=[\frac{\partial\boldsymbol{a}_{r}(\theta_{1})}{\partial\theta_{1}},\frac{\partial\boldsymbol{a}_{r}(\theta_{2})}{\partial\theta_{2}},\cdots,\frac{\partial\boldsymbol{a}_{r}(\theta_{M})}{\partial\theta_{M}}],$
and $\dot{\boldsymbol{A}_{t}}=[\frac{\partial\boldsymbol{a}_{t}(\theta_{1})}{\partial\theta_{1}},\frac{\partial\boldsymbol{a}_{t}(\theta_{2})}{\partial\theta_{2}},\cdots,\frac{\partial\boldsymbol{a}_{t}(\theta_{M})}{\partial\theta_{M}}].$
Following the similar derivation procedures in \cite{li2007range},
we have the FIM in a block matrix form:\vspace{-0.2cm}
\begin{equation}
\boldsymbol{F}_{\boldsymbol{\xi}}=\frac{2}{\sigma_{r}^{2}}\left[\begin{array}{ccc}
\mathrm{Re}(\boldsymbol{F}_{11}) & \mathrm{Re}(\boldsymbol{F}_{12}) & -\mathrm{Im}(\boldsymbol{F}_{12})\\
\mathrm{Re}(\boldsymbol{F}_{12})^{T} & \mathrm{Re}(\boldsymbol{F}_{22}) & -\mathrm{Im}(\boldsymbol{F}_{22})\\
-\mathrm{Im}(\boldsymbol{F}_{12})^{T} & -\mathrm{Im}(\boldsymbol{F}_{22})^{T} & \mathrm{Re}(\boldsymbol{F}_{22})
\end{array}\right],\label{eq:FIM}
\end{equation}
\vspace{-0.2cm}
where\vspace{-0.3cm}
\begin{eqnarray}
\boldsymbol{F}_{11} & = & L(\dot{\boldsymbol{A}_{r}}^{T}\dot{\boldsymbol{A}_{r}}^{c})\odot(\boldsymbol{B}^{c}\boldsymbol{A}_{t}^{T}\boldsymbol{S}_{x}^{T}\boldsymbol{A}_{t}^{c}\boldsymbol{B}^{T})+L(\dot{\boldsymbol{A}_{r}}^{T}\boldsymbol{A}_{r}^{c})\odot(\boldsymbol{B}^{c}\boldsymbol{A}_{t}^{T}\boldsymbol{S}_{x}^{T}\dot{\boldsymbol{A}_{t}}^{c}\boldsymbol{B}^{T})\nonumber \\
 &  & +L(\boldsymbol{A}_{r}^{T}\dot{\boldsymbol{A}_{r}}^{c})\odot(\boldsymbol{B}^{c}\dot{\boldsymbol{A}_{t}}^{T}\boldsymbol{S}_{x}^{T}\boldsymbol{A}_{t}^{c}\boldsymbol{B}^{T})+L(\boldsymbol{A}_{r}^{T}\boldsymbol{A}_{r}^{c})\odot(\boldsymbol{B}^{c}\dot{\boldsymbol{A}_{t}}^{T}\boldsymbol{S}_{x}^{T}\dot{\boldsymbol{A}_{t}}^{c}\boldsymbol{B}^{T}),\\
\boldsymbol{F}_{12} & = & L(\dot{\boldsymbol{A}_{r}}^{T}\boldsymbol{A}_{r}^{c})\odot(\boldsymbol{B}^{c}\boldsymbol{A}_{t}^{T}\boldsymbol{S}_{x}^{T}\boldsymbol{A}_{t}^{c})+L(\boldsymbol{A}_{r}^{T}\boldsymbol{A}_{r}^{c})\odot(\boldsymbol{B}^{c}\dot{\boldsymbol{A}_{t}}^{T}\boldsymbol{S}_{x}^{T}\boldsymbol{A}_{t}^{c}),\\
\boldsymbol{F}_{22} & = & L(\boldsymbol{A}_{r}^{T}\boldsymbol{A}_{r}^{c})\odot(\boldsymbol{A}_{t}^{T}\boldsymbol{S}_{x}^{T}\boldsymbol{A}_{t}^{c}).
\end{eqnarray}
\vspace{-0.2cm}
Then, the corresponding CRB matrix for estimating $\boldsymbol{\xi}$
is\vspace{-0.2cm}
\begin{equation}
\boldsymbol{C}_{2}=\boldsymbol{F}_{\boldsymbol{\xi}}^{-1}.\label{eq:CRB2}
\end{equation}
\vspace{-0.2cm}
Similar as in \eqref{eq:CRB1}, we adopt the trace of the CRB matrix
$\boldsymbol{C}_{2}$ as the scalar performance metric for estimating
$\boldsymbol{\xi}$, i.e.,\vspace{-0.2cm}
\begin{equation}
\mathrm{CRB}_{2}(\boldsymbol{S}_{x})=\mathrm{tr}(\boldsymbol{C}_{2})=\mathrm{tr}(\boldsymbol{F}_{\boldsymbol{\xi}}^{-1}).
\end{equation}
\vspace{-0.2cm}

\subsection{Problem Formulation}

We are interested in revealing the fundamental C-R performance tradeoff
limits of the multicast multi-target ISAC system for both Scenarios
I and II. Notice that in our considered setup, the transmit multi-beam
signals $\boldsymbol{X}$ are reused for both sensing and communications.
This ensures the optimal performance, as adopting separate communication
and sensing signal beams would lead to the degradation in both communication
and sensing performances, due to the potential mutual interference
between communication and sensing signals. Let $\mathcal{C}_{1}$
and $\mathcal{C}_{2}$ denote the achievable C-R regions of the ISAC
system in Scenario I and II, respectively, which are defined as\vspace{-0.2cm}
\begin{align}
\mathcal{C}_{i}\overset{\triangle}{=}\underset{\boldsymbol{S}_{x}\succeq\boldsymbol{0}}{\bigcup}\big\{(\hat{R},\hat{\Psi})|\hat{R}\leq R(\boldsymbol{S}_{x}), & \hat{\Psi}\geq\mathrm{CRB}_{i}(\boldsymbol{S}_{x}),\mathrm{tr}(\boldsymbol{S}_{x})\leq P\big\},i\in\{1,2\}.
\end{align}
Our objective is to characterize the Pareto boundary of each region
$\mathcal{C}_{i},i\in\{1,2\}$, at each point of which we cannot improve
one criteria without sacrificing the other \cite{Haocheng2022,xiong2022flowing}.
Note that the multicast rate $R(\boldsymbol{S}_{x})$ in \eqref{eq:rate},
$\mathrm{CRB}_{1}(\boldsymbol{S}_{x})$ in \eqref{eq:CRB}, and $\mathrm{CRB}_{2}(\boldsymbol{S}_{x})$
in \eqref{eq:CRB2} are both convex with respect to (w.r.t.) $\boldsymbol{S}_{x}$.
Therefore, the C-R regions $\mathcal{C}_{i},i\in\{1,2\}$, are both
convex sets. As a result, we can characterize the whole Pareto boundary
by finding each boundary point via minimizing the CRB subject to varying
multicast rate thresholds \cite{Haocheng2022,xiong2022flowing,boyd2004convex}.

In particular, we optimize the transmit covariance matrix $\boldsymbol{S}_{x}$
to minimize the estimation CRB for multi-target estimation while ensuring
a minimum multicast communication rate $\bar{R}$, subject to a maximum
transmit power constraint in \eqref{eq:Average power}. First, consider
Scenario I, for which the multicast-rate-constrained CRB minimization
problem is formulated as\vspace{-0.2cm}
\begin{eqnarray}
(\mathrm{P1}): & \underset{\boldsymbol{S}_{x}\succeq\boldsymbol{0}}{\min} & \frac{N_{r}\sigma_{r}^{2}}{L}\mathrm{tr}(\boldsymbol{S}_{x}^{-1})\nonumber \\
 & \mathrm{s.t.} & \underset{k\in\mathcal{K}}{\min}\Big\{\log_{2}\Big(1+\frac{\boldsymbol{h}_{k}^{H}\boldsymbol{S}_{x}\boldsymbol{h}_{k}}{\sigma^{2}}\Big)\Big\}\geq\bar{R},\nonumber \\
 &  & \mathrm{tr}(\boldsymbol{S}_{x})\leq P.\label{eq:CRB minimization problem1}
\end{eqnarray}
\vspace{-0.2cm}
By introducing $\Gamma=\sigma^{2}(2^{\bar{R}}-1)$, problem (P1) is
equivalently reformulated as\vspace{-0.2cm}
 
\begin{subequations}
\begin{eqnarray}
(\mathrm{P1.1}): & \underset{\boldsymbol{S}_{x}\succeq\boldsymbol{0}}{\min} & \mathrm{tr}(\boldsymbol{S}_{x}^{-1})\nonumber \\
 & \mathrm{s.t.} & \boldsymbol{h}_{k}^{H}\boldsymbol{S}_{x}\boldsymbol{h}_{k}\geq\Gamma,\forall k\in\mathcal{K},\label{eq:rate constraint}\\
 &  & \mathrm{tr}(\boldsymbol{S}_{x})\leq P.\label{eq:Power constraint}
\end{eqnarray}
\end{subequations}
 In problem (P1.1), the objective is a trace inverse function that
is strictly convex w.r.t. $\boldsymbol{S}_{x}$ \cite{horn2012matrix},
and the constraints in \eqref{eq:rate constraint} and \eqref{eq:Power constraint}
are both affine. As a result, problem (P1.1) is a convex optimization
problem. In Section III, we will obtain its globally optimal solution
in a semi-closed form by applying the Lagrange duality method \cite{boyd2004convex}.

Next, we consider Scenario II, for which the multicast-rate-constrained
CRB minimization problem is formulated as\vspace{-0.2cm}
\begin{eqnarray}
(\mathrm{P2}): & \underset{\boldsymbol{S}_{x}\succeq\boldsymbol{0}}{\min} & \mathrm{CRB}_{2}(\boldsymbol{S}_{x})\nonumber \\
 & \mathrm{s.t.} & \underset{k\in\mathcal{K}}{\min}\Big\{\log_{2}\Big(1+\frac{\boldsymbol{h}_{k}^{H}\boldsymbol{S}_{x}\boldsymbol{h}_{k}}{\sigma^{2}}\Big)\Big\}\geq\bar{R},\nonumber \\
 &  & \mathrm{tr}(\boldsymbol{S}_{x})\leq P.\label{eq:CRB minimization scenario II}
\end{eqnarray}
\vspace{-0.2cm}
Then, problem (P2) is equivalently reformulated as\vspace{-0.2cm}
\begin{subequations}
\begin{eqnarray}
\mathrm{(P2.1):} & \underset{\boldsymbol{S}_{x}\succeq\boldsymbol{0}}{\min} & \mathrm{CRB}_{2}(\boldsymbol{S}_{x})\nonumber \\
 & \mathrm{s.t.} & \boldsymbol{h}_{k}^{H}\boldsymbol{S}_{x}\boldsymbol{h}_{k}\geq\Gamma,\forall k\in\mathcal{K},\label{eq:rate constraint 2}\\
 &  & \mathrm{tr}(\boldsymbol{S}_{x})\leq P.\label{eq:power constraint2}
\end{eqnarray}
\end{subequations}
In Section IV, we will find the optimal solution to problem (P2.1)
or equivalently (P2) by using the semi-definite programing (SDP) technique.

It is worth noting that by solving the multicast-rate-constrained
CRB minimization problem (P1) or (P2) under one given rate threshold
$\bar{R}$, we can generally find one Pareto boundary point at C-R
region $\mathcal{C}_{1}$ for Scenario I or $\mathcal{C}_{2}$ for
Scenario II. By adjusting the value of $\bar{R}$, we can obtain the
complete Pareto boundary points on C-R region $\mathcal{C}_{i}$,
$i\in\{1,2\}$, for achieving different C-R tradeoffs \cite{Haocheng2022,xiong2022flowing}.
\begin{rem}
\label{remark1} Before proceeding to solve problems (P1) and (P2)
for ISAC, we discuss the two special cases with sole information multicasting
and sole multi-target sensing, respectively. First, considering the
information multicasting only, the maximum multicast capacity can
be found by solving the following problem:\vspace{-0.2cm}
\begin{eqnarray}
 & \underset{\boldsymbol{S}_{x}\succeq\boldsymbol{0}}{\max} & \underset{k\in\mathcal{K}}{\min}\Big\{\log_{2}\Big(1+\frac{\boldsymbol{h}_{k}^{H}\boldsymbol{S}_{x}\boldsymbol{h}_{k}}{\sigma^{2}}\Big)\Big\}\nonumber \\
 & \mathrm{s.t.} & \mathrm{tr}(\boldsymbol{S}_{x})\leq P.\label{eq:capacity problem}
\end{eqnarray}
It has been shown in \cite{jindal2006capacity} that problem \eqref{eq:capacity problem}
is optimally solvable via SDP. Let $\boldsymbol{S}_{x}^{\mathrm{com}}$
denote the optimal solution to problem \eqref{eq:capacity problem}.
Accordingly, we have the maximum multicast capacity as $R_{\mathrm{max}}=R(\boldsymbol{S}_{x}^{\mathrm{com}})$,
and the corresponding CRB in Scenario $i\in\{1,2\}$ as $\mathrm{CRB}_{i}(\boldsymbol{S}_{x}^{\mathrm{com}})$.
As such, we obtain a corner Pareto boundary point of the C-R region
$\mathcal{C}_{i}$ as $(R_{\mathrm{max}},\mathrm{CRB}_{i}(\boldsymbol{S}_{x}^{\mathrm{com}}))$,
which corresponds to the maximum multicast rate. Note that in Scenario
I, if $\boldsymbol{S}_{x}^{\mathrm{com}}$ is rank deficient, i.e.,
not a full-rank matrix, we have $\mathrm{CRB}_{1}(\boldsymbol{S}_{x}^{\mathrm{com}})\rightarrow\infty$,
which means the transmit degrees of freedom (DoFs) are not sufficient
to estimate the complete multi-target response matrix $\boldsymbol{G}$.
Next, we consider the multi-target sensing only, in which the CRB
minimization problem is expressed as follows, where $i=1$ and $i=2$
correspond to Scenarios I and II, respectively.\vspace{-0.2cm}
\begin{eqnarray}
 & \underset{\boldsymbol{S}_{x}\succeq\boldsymbol{0}}{\min} & \mathrm{CRB}_{i}(\boldsymbol{S}_{x})\nonumber \\
 & \mathrm{s.t.} & \mathrm{tr}(\boldsymbol{S}_{x})\leq P.\label{eq:minimum crb}
\end{eqnarray}
For Scenario I, it has been shown in \cite{liu2021cramer} that the
optimal solution to problem \eqref{eq:minimum crb} is $\boldsymbol{S}_{x}^{\mathrm{sen},\mathrm{1}}=\frac{P}{N_{t}}\boldsymbol{I}$,
i.e., the isotropic transmission is optimal. For Scenario II, it has
been proved in \cite{li2007range} that the optimal solution $\boldsymbol{S}_{x}^{\mathrm{sen},\mathrm{2}}$
to problem \eqref{eq:minimum crb} can be obtained via \ac{sdp}.
With $\boldsymbol{S}_{x}^{\mathrm{sen},i}$ at hand, the minimum CRB
is obtained as $\mathrm{CRB}_{\min}^{i}=\mathrm{CRB}_{i}(\boldsymbol{S}_{x}^{\mathrm{sen},i})$,
and the corresponding multicast rate becomes $R(\boldsymbol{S}_{x}^{\mathrm{sen},i})$.
We thus obtain another corner Pareto boundary point $\big(R(\boldsymbol{S}_{x}^{\mathrm{sen},i}),\mathrm{CRB}_{\min}^{i}\big)$
at the C-R region $\mathcal{C}_{i}$, which corresponds to multi-target
CRB minimization.\vspace{-0.2cm}
\end{rem}

\section{Optimal Solution to Problem (P1.1) or (P1) for Scenario I}

In this section, we present the optimal solution to the multicast-rate-constrained
CRB minimization problem (P1.1) for Scenario I. Notice that problem
(P1.1) is convex and satisfies the Slater's condition \cite{boyd2004convex}.
As a result, the strong duality holds between problem (P1.1) and its
dual problem. Therefore, we adopt the Lagrange duality method to find
the optimal solution to (P1.1) and analyze its structure to gain insights.
The optimal solution to (P1.1) or equivalently (P1) is obtained in
the following proposition. 
\begin{prop}
\textup{\label{thm:1}Suppose that }$\lambda^{\mathrm{opt}}$ \textup{and}
$\{\mu_{k}^{\mathrm{opt}}\}$\textup{ denote the optimal dual solutions
to the dual problem of (P1.1), which are associated to constraints
\eqref{eq:Power constraint} and \eqref{eq:rate constraint}, respectivaly.
Define $\boldsymbol{A}(\lambda^{\mathrm{opt}},\{\mu_{k}^{\mathrm{opt}}\})\overset{\triangle}{=}\lambda^{\mathrm{opt}}\boldsymbol{I}-\sum_{k=1}^{K}\mu_{k}^{\mathrm{opt}}\boldsymbol{h}_{k}\boldsymbol{h}_{k}^{H}\succeq\boldsymbol{0}$,
for which the \ac{evd} is $\boldsymbol{A}(\lambda^{\mathrm{opt}},\{\mu_{k}^{\mathrm{opt}}\})=\boldsymbol{U}^{\mathrm{opt}}\boldsymbol{\Lambda}^{\mathrm{opt}}\boldsymbol{U}^{\mathrm{opt}H}$,
where $\boldsymbol{\Lambda}^{\mathrm{opt}}=\mathrm{diag}(\alpha_{1}^{\mathrm{opt}},\dots,\alpha_{N_{t}}^{\mathrm{opt}})$
. Then it must follow that $\mathrm{rank}(\boldsymbol{A}(\lambda^{\mathrm{opt}},\{\mu_{k}^{\mathrm{opt}}\}))=N_{t}$
and accordingly $\alpha_{1}^{\mathrm{opt}}\ge\dots\ge\alpha_{N_{t}}^{\mathrm{opt}}>0$.
The optimal solution $\boldsymbol{S}_{x}^{\mathrm{opt}}$ to problem
(P1.1) is given by} \vspace{-0.3cm}
\begin{align}
\boldsymbol{S}_{x}^{\mathrm{opt}}= & \boldsymbol{U}^{\mathrm{opt}}{\boldsymbol{\Sigma}^{\mathrm{opt}}}\boldsymbol{U}^{\mathrm{opt}H},
\end{align}
\textup{where ${\boldsymbol{\Sigma}^{\mathrm{opt}}}=\mathrm{diag}(\tau_{1}^{\mathrm{opt}},\dots,\tau_{N_{t}}^{\mathrm{opt}}$)
and $\tau_{i}^{\mathrm{opt}}=(\alpha_{i}^{\mathrm{opt}})^{-1/2}$,
$\forall i\in\{1,\dots,N_{t}\}$.}\vspace{-0.2cm}
\end{prop}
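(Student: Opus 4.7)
The plan is to exploit the fact that problem (P1.1) is convex and Slater's condition is satisfied (e.g., $\boldsymbol{S}_x = (P/N_t)\boldsymbol{I}$ is strictly feasible whenever $P$ is large enough to meet the rate targets), so strong duality holds and KKT conditions are necessary and sufficient for optimality. First, I would form the partial Lagrangian
\begin{equation}
\mathcal{L}(\boldsymbol{S}_x,\lambda,\{\mu_k\})=\mathrm{tr}(\boldsymbol{S}_x^{-1})+\lambda\big(\mathrm{tr}(\boldsymbol{S}_x)-P\big)-\sum_{k=1}^{K}\mu_k\big(\boldsymbol{h}_k^H\boldsymbol{S}_x\boldsymbol{h}_k-\Gamma\big),
\end{equation}
with multipliers $\lambda\ge 0$ and $\mu_k\ge 0$ dual to \eqref{eq:Power constraint} and \eqref{eq:rate constraint}, respectively, keeping the PSD constraint $\boldsymbol{S}_x\succeq\boldsymbol{0}$ implicit. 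Collecting the linear-in-$\boldsymbol{S}_x$ terms yields $\mathrm{tr}(\boldsymbol{S}_x^{-1})+\mathrm{tr}(\boldsymbol{A}(\lambda,\{\mu_k\})\boldsymbol{S}_x)+\text{const.}$

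Next, I would apply the stationarity KKT condition. Using the matrix identity $\partial\,\mathrm{tr}(\boldsymbol{S}_x^{-1})/\partial\boldsymbol{S}_x=-\boldsymbol{S}_x^{-2}$ and $\partial\,\mathrm{tr}(\boldsymbol{A}\boldsymbol{S}_x)/\partial\boldsymbol{S}_x=\boldsymbol{A}$, setting the gradient to zero gives $\boldsymbol{A}(\lambda^{\mathrm{opt}},\{\mu_k^{\mathrm{opt}}\})=(\boldsymbol{S}_x^{\mathrm{opt}})^{-2}$. Substituting the EVD $\boldsymbol{A}(\lambda^{\mathrm{opt}},\{\mu_k^{\mathrm{opt}}\})=\boldsymbol{U}^{\mathrm{opt}}\boldsymbol{\Lambda}^{\mathrm{opt}}\boldsymbol{U}^{\mathrm{opt}H}$ and taking the (unique) PSD square-root-inverse immediately gives the claimed form $\boldsymbol{S}_x^{\mathrm{opt}}=\boldsymbol{U}^{\mathrm{opt}}\boldsymbol{\Sigma}^{\mathrm{opt}}\boldsymbol{U}^{\mathrm{opt}H}$ with $\tau_i^{\mathrm{opt}}=(\alpha_i^{\mathrm{opt}})^{-1/2}$.

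The main obstacle, and the step that needs the most care, is establishing that $\boldsymbol{A}(\lambda^{\mathrm{opt}},\{\mu_k^{\mathrm{opt}}\})$ is strictly positive definite (rank $N_t$), so that $\alpha_{N_t}^{\mathrm{opt}}>0$ and the square-root inverse is well defined. I would argue this in two short steps. First, PSD-ness: if $\boldsymbol{A}$ had a negative eigenvalue along some direction $\boldsymbol{u}$, then scaling $\boldsymbol{S}_x\leftarrow\boldsymbol{S}_x+t\,\boldsymbol{u}\boldsymbol{u}^H$ and letting $t\to\infty$ would drive the Lagrangian (viewed as the dual inner minimization) to $-\infty$, contradicting strong duality with a finite primal value, so $\boldsymbol{A}\succeq\boldsymbol{0}$. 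Second, strict positivity: the primal optimum $\boldsymbol{S}_x^{\mathrm{opt}}$ must itself be strictly positive definite, since otherwise $\mathrm{tr}(\boldsymbol{S}_x^{-1})=+\infty$, whereas the feasible $(P/N_t)\boldsymbol{I}$ attains a finite objective (it is feasible when the rate constraints are meetable, which is required for (P1.1) to be well posed). Combined with stationarity $\boldsymbol{A}=(\boldsymbol{S}_x^{\mathrm{opt}})^{-2}$, this forces $\boldsymbol{A}\succ\boldsymbol{0}$, hence $\mathrm{rank}(\boldsymbol{A})=N_t$ and $\alpha_1^{\mathrm{opt}}\ge\cdots\ge\alpha_{N_t}^{\mathrm{opt}}>0$.

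Finally, I would confirm that the constructed $\boldsymbol{S}_x^{\mathrm{opt}}$, together with the dual variables obtained by solving the dual problem (which can itself be written in a tractable form and handled numerically, e.g., via a bisection on $\lambda$ and an inner search over $\{\mu_k\}$), satisfies all remaining KKT conditions: primal feasibility in \eqref{eq:rate constraint}--\eqref{eq:Power constraint}, dual feasibility $\lambda^{\mathrm{opt}},\mu_k^{\mathrm{opt}}\ge 0$, and complementary slackness. Since (P1.1) is convex, KKT is sufficient, and the closed-form expression produced above is therefore globally optimal, proving the proposition.
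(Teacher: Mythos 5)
Your proof is correct, but it takes a genuinely different route from the paper's. The paper works through the dual problem explicitly: it first shows $\boldsymbol{A}(\lambda,\{\mu_{k}\})\succeq\boldsymbol{0}$ is needed for a bounded dual function (same unboundedness argument as yours), then evaluates the inner minimization in closed form for \emph{arbitrary} dual-feasible $(\lambda,\{\mu_{k}\})$ — this is its Lemma 3, which requires the trace inequality $\mathrm{tr}(\boldsymbol{B}_{0}^{-1})\geq\sum_{i}1/a(i,i)$ (with equality iff diagonal) to reduce the matrix problem to a scalar one, and which must separately handle the rank-deficient case where $\tau_{i}\rightarrow\infty$ along null directions of $\boldsymbol{A}$. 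Full-rankness of $\boldsymbol{A}$ at the dual optimum is then deduced from the fact that otherwise the power constraint could not be met. You bypass all of this via KKT stationarity: since $\boldsymbol{S}_{x}^{\mathrm{opt}}\succ\boldsymbol{0}$ (finite objective) lies in the interior of the PSD cone, $-(\boldsymbol{S}_{x}^{\mathrm{opt}})^{-2}+\boldsymbol{A}(\lambda^{\mathrm{opt}},\{\mu_{k}^{\mathrm{opt}}\})=\boldsymbol{0}$ gives both the full-rankness of $\boldsymbol{A}$ and the closed form $\boldsymbol{S}_{x}^{\mathrm{opt}}=\boldsymbol{A}^{-1/2}$ in one stroke, with no need for the auxiliary trace-inequality lemma. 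Your argument is shorter and more direct for the proposition as stated; what the paper's longer route buys is the explicit dual function and its subgradients at non-optimal dual points, which are exactly what its ellipsoid method needs to actually compute $(\lambda^{\mathrm{opt}},\{\mu_{k}^{\mathrm{opt}}\})$ — a quantity the proposition takes as given but which your "bisection plus inner search" sketch leaves vaguer.

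One minor imprecision: you justify finiteness of the optimal value by claiming $(P/N_{t})\boldsymbol{I}$ is feasible "when the rate constraints are meetable." That implication is false in general — feasibility of (P1.1) does not imply feasibility of isotropic transmission (e.g., when $\Gamma$ is close to $P\|\boldsymbol{h}_{k}\|^{2}$, beamforming toward $\boldsymbol{h}_{k}$ is required). The fact you actually need — existence of a full-rank feasible point, hence a finite optimal value and $\boldsymbol{S}_{x}^{\mathrm{opt}}\succ\boldsymbol{0}$ — already follows from the Slater condition you (and the paper) assume at the outset, so the gap is cosmetic rather than structural.
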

\begin{proof}
Please refer to Appendix A.
\end{proof}
Based on Proposition 1, we have the following remark to reveal more
insights on the optimal solution to (P1.1) or (P1).
\begin{rem}
With the optimal dual solution $\lambda^{\mathrm{opt}}$ and $\{\mu_{k}^{\mathrm{opt}}\}$,
we define the weighted communication channel of the $K$ CUs as $\boldsymbol{H}=\sum_{k=1}^{K}\mu_{k}^{\mathrm{opt}}\boldsymbol{h}_{k}\boldsymbol{h}_{k}^{H}\succeq\boldsymbol{0}$,
with rank $N_{\text{com}}=\text{rank}(\boldsymbol{H})\le N_{t}$.
The EVD of $\boldsymbol{H}$ is then expressed as $\boldsymbol{H}=[{\boldsymbol{U}}_{\text{sen}}~{\boldsymbol{U}}_{\text{com}}]{\boldsymbol{\Delta}}[{\boldsymbol{U}}_{\text{sen}}~{\boldsymbol{U}}_{\text{com}}]^{H}$,
where ${\boldsymbol{\Delta}}=\mathrm{diag}(0,\ldots,0,\delta_{1},\ldots,\delta_{N_{\text{com}}})$
with $\delta_{1}\ge\ldots\ge\delta_{N_{\text{com}}}\geq0$ denoting
the $N_{\text{com}}$ eigenvalues, and $\boldsymbol{U}_{\text{com}}\in\mathbb{C}^{N_{t}\times N_{\text{com}}}$
and ${\boldsymbol{U}}_{\text{sen}}\in\mathbb{C}^{N_{t}\times(N_{t}-N_{\text{com}})}$
consist of the eigenvectors corresponding to the non-zero and zero
eigenvalues, thus spanning the communication subspaces and sensing
subspaces, respectively. In this case, recall that $\boldsymbol{A}(\lambda^{\mathrm{opt}},\{\mu_{k}^{\mathrm{opt}}\})=\lambda^{\mathrm{opt}}\boldsymbol{I}-\sum_{k=1}^{K}\mu_{k}^{\mathrm{opt}}\boldsymbol{h}_{k}\boldsymbol{h}_{k}^{H}=\lambda^{\mathrm{opt}}\boldsymbol{I}-\boldsymbol{H}\succ\boldsymbol{0}$.
It is evident that the EVD of $\boldsymbol{A}(\lambda^{\mathrm{opt}},\{\mu_{k}^{\mathrm{opt}}\})$
can be expressed as 
\begin{align}
\boldsymbol{A}(\lambda^{\mathrm{opt}},\{\mu_{k}^{\mathrm{opt}}\})= & \boldsymbol{U}^{\mathrm{opt}}\boldsymbol{\Lambda}^{\mathrm{opt}}\boldsymbol{U}^{\mathrm{opt}H}=[{\boldsymbol{U}}_{\text{sen}}~{\boldsymbol{U}}_{\text{com}}]\big(\lambda^{\mathrm{opt}}\boldsymbol{I}-{\boldsymbol{\Delta}}\big)[{\boldsymbol{U}}_{\text{sen}}~{\boldsymbol{U}}_{\text{com}}]^{H},
\end{align}
where $\boldsymbol{U}^{\mathrm{opt}}=[{\boldsymbol{U}}_{\text{sen}}~\boldsymbol{U}_{\text{com}}]$
and $\boldsymbol{\Lambda}^{\mathrm{opt}}=\mathrm{diag}(\alpha_{1}^{\mathrm{opt}},\dots,\alpha_{N_{t}}^{\mathrm{opt}})$
with\vspace{-0.2cm}
\begin{equation}
\alpha_{i}^{\mathrm{opt}}=\left\{ \begin{array}{cc}
\lambda^{\mathrm{opt}}, & {\text{if~}}i=1,\ldots,N_{t}-N_{\text{com}}\\
\lambda^{\mathrm{opt}}-\delta_{i-(N_{t}-N_{\text{com}})}, & {\text{if~}}i=N_{t}-N_{\text{com}}+1,\ldots,N_{t}
\end{array}\right..
\end{equation}
\vspace{-0.2cm}
In this case, it follows from Proposition \ref{thm:1} that the optimal
transmit covariance is given by\vspace{-0.2cm}
\begin{align}
\boldsymbol{S}_{x}^{\mathrm{opt}} & =\boldsymbol{U}{\boldsymbol{\Sigma}^{\mathrm{opt}}}\boldsymbol{U}^{\mathrm{opt}H}={\boldsymbol{U}}_{\text{sen}}{\boldsymbol{\Sigma}}_{\text{sen}}^{\mathrm{opt}}{\boldsymbol{U}}_{\text{sen}}^{H}+\boldsymbol{U}_{\text{com}}{\boldsymbol{\Sigma}}_{\text{com}}^{\mathrm{opt}}\boldsymbol{U}_{\text{com}}^{H},
\end{align}
where ${\boldsymbol{\Sigma}}_{\text{sen}}^{\mathrm{opt}}=\mathrm{diag}(\sqrt{1/\lambda^{\mathrm{opt}}},\ldots,\sqrt{1/\lambda^{\mathrm{opt}}})$
and ${\boldsymbol{\Sigma}}_{\text{com}}^{\mathrm{opt}}=\mathrm{diag}(\sqrt{1/(\lambda^{\mathrm{opt}}-\delta_{1})},\ldots,\sqrt{1/(\lambda^{\mathrm{opt}}-\delta_{N_{\text{com}}})})$.
It is clear that the transmit covariance can be decomposed into two
parts, including $\boldsymbol{U}_{\text{com}}{\boldsymbol{\Sigma}}_{\text{com}}^{\mathrm{opt}}\boldsymbol{U}_{\text{com}}^{H}$
towards the CUs for ISAC, and ${\boldsymbol{U}}_{\text{sen}}{\boldsymbol{\Sigma}}_{\text{sen}}^{\mathrm{opt}}{\boldsymbol{U}}_{\text{sen}}^{H}$
for dedicated sensing. In the first signal part for dedicated sensing,
equal power allocation is adopted. In contrast, in the second signal
part for ISAC, more transmit power is allocated over the sub-channels
with stronger combined channel gains ($\alpha_{i}^{\mathrm{opt}}$).\vspace{-0.2cm}
\end{rem}

\section{Optimal Solution to Problem (P2.1) or (P2) for Scenario II}

In this section, we present the optimal solution to problem (P2.1)
or (P2) by exploiting the SDP technique. Towards this end, we re-express
$\boldsymbol{F}_{\boldsymbol{\xi}}^{-1}[i,i]$ as $\boldsymbol{e}_{i}^{T}\boldsymbol{F}_{\boldsymbol{\xi}}^{-1}\boldsymbol{e}_{i}$,
$\forall i\in\{1,...,3M\}$, where $\boldsymbol{e}_{i}$ denotes the
$i$-th column vector of an identity matrix with proper dimension.
By introducing a set of auxiliary variables $\{t_{i}\}_{i=1}^{3M}$,
problem (P2.1) is equivalently reformulated as\vspace{-0.2cm}
 
\begin{subequations}
\begin{eqnarray}
\mathrm{(P2.2):} & \underset{\boldsymbol{S}_{x}\succeq\boldsymbol{0},\{t_{i}\}_{i=1}^{3M}}{\min} & \sum_{i=1}^{3M}t_{i}\nonumber \\
 & \mathrm{s.t.} & t_{i}\geq\boldsymbol{e}_{i}^{T}\boldsymbol{F}_{\boldsymbol{\xi}}^{-1}\boldsymbol{e}_{i},\forall i\in\{1,\dots,3M\},\label{eq:Auxiliary variables}\\
 &  & \textrm{\eqref{eq:rate constraint 2} and \eqref{eq:power constraint2}}.\nonumber 
\end{eqnarray}
\end{subequations}
\vspace{-0.2cm}
Next, by exploiting the Schur component, the constraints in \eqref{eq:Auxiliary variables}
are equivalently re-expressed as\vspace{-0.2cm}
\begin{equation}
\left[\begin{array}{cc}
\boldsymbol{F}_{\boldsymbol{\xi}} & \boldsymbol{e}_{i}\\
\boldsymbol{e}_{i}^{T} & t_{i}
\end{array}\right]\succeq\boldsymbol{0},\forall i\in\{1,\dots,3M\}.\label{eq:Auxiliary variables 2}
\end{equation}
\vspace{-0.2cm}
By replacing \eqref{eq:Auxiliary variables} in (P2.2) as \eqref{eq:Auxiliary variables 2},
problem (P2.2) is further equivalently reformulated into 
\begin{subequations}
\begin{eqnarray}
(\mathrm{P2.3}): & \underset{\boldsymbol{S}_{x}\succeq\boldsymbol{0},\{t_{i}\}_{i=1}^{3M}}{\min} & \sum_{i=1}^{3M}t_{i}\nonumber \\
 & \mathrm{s.t.} & \textrm{\textrm{\eqref{eq:rate constraint 2}, \eqref{eq:power constraint2}}, and \eqref{eq:Auxiliary variables 2}}.
\end{eqnarray}
\end{subequations}
\vspace{-0.1cm}
 Notice that the constraints in \eqref{eq:Auxiliary variables 2}
are linear matrix inequality (LMI) constraints w.r.t. $\boldsymbol{S}_{x}$
and the constraints in \eqref{eq:rate constraint 2} and \eqref{eq:power constraint2}
are affine. As a result, (P2.3) is a standard convex SDP, which can
be efficiently solved via convex optimization tools, such as CVX \cite{cvx}.
Therefore, we obtain the optimal solution of $\boldsymbol{S}_{x}$
to problem (P2.3), and thus problem (P2), denoted by $\boldsymbol{S}_{x}^{\mathrm{opt,2}}$.
\begin{rem}
It is worth discussing the structure of the optimal solution $\boldsymbol{S}_{x}^{\mathrm{opt,2}}$
for gaining insights. First, it is observed from \eqref{eq:CRB2}
that only the transmit signal components lying in the subspaces spanned
by $\left[\boldsymbol{A}_{t},\dot{\boldsymbol{A}_{t}}\right]$ contribute
to $\mathrm{CRB}_{2}(\boldsymbol{S}_{x})$. Similarly, it is observed
from \eqref{eq:rate} that only the signal components lying in the
subspaces spanned by $\left[\boldsymbol{h}_{1},\dots,\boldsymbol{h}_{K}\right]$
contribute to multicast communication rate $R(\boldsymbol{S}_{x})$.
In this case, suppose that the rank of the accumulated matrix $\left[\boldsymbol{A}_{t},\dot{\boldsymbol{A}_{t}},\boldsymbol{h}_{1},\dots,\boldsymbol{h}_{K}\right]\in\mathbb{C}^{N_{t}\times(2M+K)}$
is $J$, i.e., $\mathrm{rank}\big(\left[\boldsymbol{A}_{t},\dot{\boldsymbol{A}_{t}},\boldsymbol{h}_{1},\dots,\boldsymbol{h}_{K}\right]\big)=J\le N_{t}$,
and its truncated singular value decomposition (SVD) is\vspace{-0.2cm}
\begin{equation}
\left[\boldsymbol{A}_{t},\dot{\boldsymbol{A}_{t}},\boldsymbol{h}_{1},\dots,\boldsymbol{h}_{K}\right]=\boldsymbol{U}\boldsymbol{\Lambda}\boldsymbol{V}^{H},
\end{equation}
where $\boldsymbol{U}\in\mathbb{C}^{N_{t}\times J}$ and $\boldsymbol{V}\in\mathbb{C}^{(2M+K)\times J}$
collect the left and right singular vectors corresponding to the non-zero
singular values, respectively. In this case, we can express the transmit
covariance matrix $\boldsymbol{S}_{x}$ as $\boldsymbol{S}_{x}={\boldsymbol{U}}^{H}\bar{\boldsymbol{S}_{x}}{\boldsymbol{U}}$,
where $\bar{\boldsymbol{S}_{x}}\in\mathbb{C}^{J\times J}$ corresponds
to the equivalent transmit covariance matrix to be decided.\textcolor{blue}{{}
}This intuitively shows that the transmit covariance matrix $\boldsymbol{S}_{x}$
should lie in the subspaces spanned by the sensing and communication
channels for efficient ISAC. In the case when $M$ and $K$ (and thus
$J$) are small but $N_{t}$ is large, we can optimize $\bar{\boldsymbol{S}_{x}}$
instead of $\boldsymbol{S}_{x}$, in order to significantly reduce
the dimension of optimization variables and thus reduce the solution
complexity.
\end{rem}

\section{Joint Information and Sensing Beamforming Design }

Sections III and IV presented the optimal transmit covariance solutions
to the multicast-rate-constrained CRB minimization problems (P1) and
(P2) for Scenarios I and II, respectively, which are of high rank
in general, especially when the number of CUs $K$ and the number
of targets $M$ become large. Note that this high-rank covariance
is typically achieved through multi-stream transmission. While the
transmitted streams may interfere with each other, it is remarkable
that separate codeword decoding can still achieve capacity by utilizing
a linear minimum mean-square error (LMMSE) successive interference
cancellation (SIC) receiver. The basic idea behind the LMMSE-SIC receiver
is as follows: for each target stream, we convey the observed sequence
to a corresponding LMMSE filter and then decode that particular stream.
Subsequently, we reconstruct the interference caused by that stream
and cancel it before decoding the next stream. Note that although
this LMMSE-SIC approach is capacity achieving, it requires high decoding
complexity\cite{heath2018foundations}. To resolve this issue, this
section presents  joint transmit information and sensing beamforming
designs, in which the BS adopts one single information beam for delivering
the common message, together with multiple dedicated sensing beams\footnote{Notice that for problems (P1) and (P2) introducing the dedicated sensing
beams is not necessary. This is due to the fact that the transmit
covariance $\boldsymbol{S}_{x}$ in (P1) and (P2) is of general rank,
and thus adding dedicated sensing beams does not improve the ISAC
performance. }.\vspace{-0.2cm}

\subsection{Joint Transmit Beamforming }

In the joint transmit beamforming, the BS uses one information beam
to deliver the common message. Let $\boldsymbol{w}\in\mathbb{C}^{N_{t}\times1}$
denote the information beamforming vector and $s_{\text{com}}(n)$
denote the common message carried by symbol $n$ with $s_{\text{com}}(n)\sim\mathcal{CN}(0,1)$.
In addition, the BS also employs dedicated sensing beams for providing
additional DoFs to facilitate the multi-target sensing. Let $\boldsymbol{x}_{\mathrm{sen}}(n)$
denote the dedicated sensing signals to provide additional DoFs for
estimating the targets parameters, which is a pseudo random vector
with mean zero and covariance matrix $\mathbb{E}(\boldsymbol{x}_{\mathrm{sen}}(n)\boldsymbol{x}_{\mathrm{sen}}^{H}(n))=\boldsymbol{S}_{\mathrm{sen}}\succeq\boldsymbol{0}$.
Without loss of generality, we assume that $\mathrm{rank}(\boldsymbol{S}_{\mathrm{sen}})=r_{\mathrm{sen}}\ge0$,
which corresponds to a number of $r_{\mathrm{sen}}$ sensing beams
that can be obtained by performing the EVD on $\boldsymbol{S}_{\mathrm{sen}}$
\cite{hua2021optimal}. In this case, the transmit signal is given
by\vspace{-0.2cm}
\begin{equation}
\boldsymbol{x}(n)=\boldsymbol{w}s_{\text{com}}(n)+\boldsymbol{x}_{\mathrm{sen}}(n),
\end{equation}
for which the corresponding transmit covariance matrix is given by\vspace{-0.2cm}
\begin{equation}
\boldsymbol{S}_{x}=\boldsymbol{S}_{\mathrm{sen}}+\boldsymbol{w}\boldsymbol{w}^{H}.
\end{equation}
The transmit power constraint in \eqref{eq:Average power} thus becomes\vspace{-0.2cm}
\begin{equation}
\mathrm{tr}(\boldsymbol{S}_{x})=\mathrm{tr}(\boldsymbol{S}_{\mathrm{sen}})+\|\boldsymbol{w}\|^{2}\le P.
\end{equation}

First, consider the multicast communication. The received signal at
CU $k\in\mathcal{K}$ is given as\vspace{-0.2cm}
\begin{equation}
\tilde{y_{k}}(n)=\boldsymbol{h}_{k}^{H}\boldsymbol{w}s_{\text{com}}(n)+\boldsymbol{h}_{k}^{H}\boldsymbol{x}_{\mathrm{sen}}(n)+z_{k}(n).\label{eq:BF signal}
\end{equation}
In general, the dedicated sensing signals $\boldsymbol{h}_{k}^{H}\boldsymbol{x}_{\mathrm{sen}}(n)$
may introduce harmful interference at the receiver of the CUs. We
impose an assumption that the CU possesses \textit{a priori} knowledge
of the radar sensing sequence and is aware of the CSI such that the
interference caused by the radar signal can be efficiently eliminated
from $\tilde{y_{k}}(n)$ in \eqref{eq:BF signal}. As a result, we
consider that the CU is capable of such interference cancellation
before information decoding \cite{hua2021optimal}.\footnote{The case without sensing interference cancellation will be considered
as a benchmark for performance evaluation in Section VI.} In this case, the SNR at the receiver of CU $k\in\mathcal{K}$ is
denoted as\vspace{-0.2cm}
\begin{equation}
\hat{\gamma_{k}}=\frac{\boldsymbol{h}_{k}^{H}\boldsymbol{w}\boldsymbol{w}^{H}\boldsymbol{h}_{k}}{\sigma^{2}},
\end{equation}
and the corresponding achievable multicast rate is\vspace{-0.2cm}
\begin{equation}
\hat{R}(\boldsymbol{w})\overset{\triangle}{=}\underset{k\in\mathcal{K}}{\min}~\Big\{\log_{2}\Big(1+\frac{\boldsymbol{h}_{k}^{H}\boldsymbol{w}\boldsymbol{w}^{H}\boldsymbol{h}_{k}}{\sigma^{2}}\Big)\Big\}.
\end{equation}

Next, we consider the multi-target sensing, in which both information
and sensing beams are jointly employed for target estimation. In this
case, the estimation CRB for Scenarios I and II are expressed as $\mathrm{CRB}_{1}(\boldsymbol{S}_{\mathrm{sen}}+\boldsymbol{w}\boldsymbol{w}^{H})$
and $\mathrm{CRB}_{2}(\boldsymbol{S}_{\mathrm{sen}}+\boldsymbol{w}\boldsymbol{w}^{H})$,
respectively.

Based on the derived multicast rate and estimation CRB above, the
multicast-rate-constrained CRB minimization problems via joint information
and sensing beamforming are formulated as problems (P3) and (P4) for
Scenarios I and II, respectively:\vspace{-0.2cm}
\begin{eqnarray}
(\mathrm{P3}): & \underset{\boldsymbol{w},\boldsymbol{S}_{\mathrm{sen}}\succeq\boldsymbol{0}}{\min} & \mathrm{CRB}_{1}(\boldsymbol{S}_{\mathrm{sen}}+\boldsymbol{w}\boldsymbol{w}^{H})\nonumber \\
 & \mathrm{s.t.} & \underset{k\in\mathcal{K}}{\min}~\Big\{\log_{2}\Big(1+\frac{\boldsymbol{h}_{k}^{H}\boldsymbol{w}\boldsymbol{w}^{H}\boldsymbol{h}_{k}}{\sigma^{2}}\Big)\Big\}\geq\bar{R},\label{eq:rate3}\\
 &  & \mathrm{tr}(\boldsymbol{S}_{\mathrm{sen}}+\boldsymbol{w}\boldsymbol{w}^{H})\leq P.\label{eq:BF1}
\end{eqnarray}
\begin{eqnarray}
(\mathrm{P4}): & \underset{\boldsymbol{w},\boldsymbol{S}_{\mathrm{sen}}\succeq\boldsymbol{0}}{\min} & \mathrm{CRB}_{2}(\boldsymbol{S}_{\mathrm{sen}}+\boldsymbol{w}\boldsymbol{w}^{H})\nonumber \\
 & \mathrm{s.t.} & \underset{k\in\mathcal{K}}{\min}~\Big\{\log_{2}\Big(1+\frac{\boldsymbol{h}_{k}^{H}\boldsymbol{w}\boldsymbol{w}^{H}\boldsymbol{h}_{k}}{\sigma^{2}}\Big)\Big\}\geq\bar{R},\\
 &  & \mathrm{tr}(\boldsymbol{S}_{\mathrm{sen}}+\boldsymbol{w}\boldsymbol{w}^{H})\leq P.\label{eq:BF2}
\end{eqnarray}

\subsection{SCA-Based Solution to Joint Beamforming Problems (P3) and (P4)}

This subsection addresses problems (P3) and (P4). Notice that the
two problems are both non-convex and thus are generally difficult
to be optimally solved. To tackle this issue, we propose to apply
the SCA technique to find high-quality solutions. As the SCA algorithms
for solving problems (P3) and (P4) are similar, we only focus on solving
problem (P3) in the following. The similar algorithm can be adopted
to solve problem (P4), for which the details are omitted for brevity.

Consider problem (P3). First, by substituting $\boldsymbol{S}_{\mathrm{sen}}=\boldsymbol{S}_{x}-\boldsymbol{w}\boldsymbol{w}^{H}$
and introducing $\Gamma=\sigma^{2}(2^{\bar{R}}-1)$, we equivalently
reformulate problem (P3) as\vspace{-0.2cm}
 
\begin{subequations}
\begin{align}
(\mathrm{P3.1}):\underset{\boldsymbol{w},\boldsymbol{S}_{x}}{\min}\  & \ensuremath{\mathrm{tr}}(\ensuremath{\boldsymbol{S}_{x}^{-1}})\nonumber \\
\mathrm{s.t.} & \ \boldsymbol{h}_{k}^{H}\boldsymbol{w}\boldsymbol{w}^{H}\boldsymbol{h}_{k}\geq\Gamma,\forall k\in\mathcal{K},\label{eq:SINR constraintCancel}\\
 & \mathrm{tr}(\boldsymbol{S}_{x})\leq P,\label{eq:22-2}\\
 & \boldsymbol{S}_{x}-\boldsymbol{w}\boldsymbol{w}^{H}\succeq\boldsymbol{0}.\label{eq:23-2}
\end{align}
\end{subequations}
 Based on the Schur component, problem (P3.1) is equivalently reformulated
as \vspace{-0.2cm}
\begin{subequations}
\begin{align}
(\mathrm{P3.2}):\underset{\boldsymbol{w},\boldsymbol{S}_{x}}{\min}\  & \ensuremath{\mathrm{tr}}(\ensuremath{\boldsymbol{S}_{x}^{-1}})\nonumber \\
\mathrm{s.t.}~ & \textrm{\eqref{eq:SINR constraintCancel} and \eqref{eq:22-2}},\nonumber \\
 & \left[\begin{array}{cc}
\boldsymbol{S}_{x} & \boldsymbol{w}\\
\boldsymbol{w}^{H} & 1
\end{array}\right]\succeq\boldsymbol{0},\label{eq:23-2-1}
\end{align}
\end{subequations}
 where constraint \eqref{eq:23-2} in (P3.1) is replaced by the LMI
constraint in \eqref{eq:23-2-1}. However, the reformulated problem
(P3.2) is still non-convex due to the non-convex quadratic constraints.

Next, we propose a SCA-based algorithm to find a high-quality solution
to (P3.2) in an iterative manner, in which problem (P3.2) is approximated
as a series of convex problems. Specifically, consider a particular
iteration $i$, with $\boldsymbol{w}^{(i)}$ and $\boldsymbol{S}_{x}^{(i)}$
denoting the local solution point. Based on the first-order approximation,
it follows that \vspace{-0.3cm}
\begin{equation}
\boldsymbol{h}_{k}^{H}\boldsymbol{w}\boldsymbol{w}^{H}\boldsymbol{h}_{k}\geq2\mathrm{Re}(\boldsymbol{w}^{H}\boldsymbol{h}_{k}\boldsymbol{h}_{k}^{H}\boldsymbol{w}^{(i)})-\boldsymbol{w}^{(i)H}\boldsymbol{h}_{k}\boldsymbol{h}_{k}^{H}\boldsymbol{w}^{(i)}.
\end{equation}
By replacing $\boldsymbol{h}_{k}^{H}\boldsymbol{w}\boldsymbol{w}^{H}\boldsymbol{h}_{k}$
as $\mathrm{Re}(\boldsymbol{w}^{H}\boldsymbol{h}_{k}\boldsymbol{h}_{k}^{H}\boldsymbol{w}^{(i)})-\boldsymbol{w}^{(i)H}\boldsymbol{h}_{k}\boldsymbol{h}_{k}^{H}\boldsymbol{w}^{(i)}$,
we approximate the non-convex quadratic constraints in \eqref{eq:SINR constraintCancel}
as \vspace{-0.3cm}
\begin{eqnarray}
 & 2\mathrm{Re}(\boldsymbol{w}^{H}\boldsymbol{h}_{k}\boldsymbol{h}_{k}^{H}\boldsymbol{w}^{(i)})-\boldsymbol{w}^{(i)H}\boldsymbol{h}_{k}\boldsymbol{h}_{k}^{H}\boldsymbol{w}^{(i)}\geq\Gamma,\forall k\in\mathcal{K}.\label{eq:first order expansion-1}
\end{eqnarray}
Accordingly, we obtain a restricted convex version of problem (P3.2)
in iteration $i$ as \vspace{-0.3cm}
\begin{eqnarray*}
(\mathrm{P3.3\textrm{-}}i): & \underset{\boldsymbol{w},\boldsymbol{S}_{x}}{\max} & \ensuremath{\mathrm{tr}}(\ensuremath{\boldsymbol{S}_{x}^{-1}})\\
 & \mathrm{s.t.} & \textrm{\eqref{eq:first order expansion-1}, \eqref{eq:22-2}, and \eqref{eq:23-2-1}.}
\end{eqnarray*}
\vspace{-0.2cm}
Notice that problem ($\textrm{P3.3-}i$) is convex and thus can be
optimally solved via numerical tools such as CVX. Let $\hat{\boldsymbol{w}}_{\mathrm{opt}}^{(i)}$
denote the optimal solution to problem ($\textrm{P3.3-}i$) in iteration
$i$, which is updated as the local point for the next iteration,
i.e., $\boldsymbol{w}^{(i+1)}=\hat{\boldsymbol{w}}_{\mathrm{opt}}^{(i)}$.
Note that in order to implement the SCA based algorithm, we need to
find a feasible initial point. In particular, we first find the transmit
information beamforming to maximize the multicast rate by using the
algorithm in \cite{sidiropoulos2006transmit} based on the \ac{sdr}
and Gaussian randomization. Suppose that $\hat{\boldsymbol{w}}$ denote
the obtained transmit information beamformer. Accordingly, we use
$\boldsymbol{w}^{(0)}=\hat{\boldsymbol{w}}$ as the initial point
for the SCA-based algorithms.

\subsection{Convergence and Computational Complexity }

In this subsection, we provide the convergence analysis for the proposed
SCA based algorithm. Due to the lower-bound approximation in \eqref{eq:first order expansion-1},
it can be shown that the optimal solution $\hat{\boldsymbol{w}}_{\mathrm{opt}}^{(i)}$
in each iteration $i$ corresponds to a feasible point for next iteration
the next iteration $i+1$. As such, the SCA iterations are ensured
to result in monotonically decreasing CRB objective values. Therefore,
the convergence of the SCA-based algorithm for handling problem (P3.2)
and thus (P3) is guaranteed \cite{razaviyayn2014successive}. Also,
following \cite{razaviyayn2014successive}, the iteration can converge
to a Karush-Kuhn-Tucker (KKT) point. 

Next, we analyze the computational complexity. First, with a maximum
iteration number $N_{\max}$, the iteration complexity is given as
$\mathcal{O}(N_{\max})$, where $\mathcal{O}(\cdot)$ is the big-O
operator. Next, we analyze the worst-case computational complexity
of solving problem (P3.2). Problem (P3.2) is a convex problem that
can be effectively solved by CVX via the interior-point method. Problem
(P3.2) involves $(N+1)$ affine constraints and one LMI constraints
with dimension $(N+1)$. Based on \cite{WangAntJ14}, the order of
computational complexity is $\mathcal{O}\big(\sqrt{N+1}N^{3}\ln(\frac{1}{\varepsilon})\big)$
where $\varepsilon>0$ is the duality gap accuracy. By combining them,
we finally have the total computational complexity as $\mathcal{O}\Big(N_{\max}N^{3.5}\ln(\frac{1}{\varepsilon})\Big)$.\vspace{-0.1cm}

\section{Numerical Results\vspace{-0.2cm}
}

This section provides numerical results to validate the performance
of our proposed designs. In the following, we first show the C-R tradeoff
performances achieved by our proposed designs as compared to other
benchmark schemes and then show the practical target estimation performances
by considering practical  estimators. For convenience, we normalize
the noise powers as $\sigma^{2}=\sigma_{r}^{2}=0\mathrm{\textrm{ dBm}}$. 

\subsection{C-R Tradeoff Performance }

This subsection presents the C-R tradeoff performance achieved by
our proposed optimal transmit covariance solution and the joint beamforming
design, as compared to the following two benchmark schemes. 
\begin{itemize}
\item \textbf{Isotropic transmission}: The BS adopts the isotropic transmission
by setting the transmit covariance as $\boldsymbol{S}_{x}=\frac{P}{N_{t}}\boldsymbol{I}=\boldsymbol{S}_{x}^{\mathrm{sen},\mathrm{1}}$,
which is optimal in minimizing the estimation CRB in Scenario I (see
Remark \ref{remark1}). This scheme is only feasible when the achieved
multicast rate $R(\boldsymbol{S}_{x}^{\mathrm{sen},\mathrm{1}})$
is no smaller than the rate threshold $\bar{R}$. \vspace{-0.1cm}
\item \textbf{Beamforming without sensing interference cancellation}: The
BS sends one information beam $\boldsymbol{w}s_{\text{com}}(n)$ together
with dedicated sensing beams $\boldsymbol{x}_{\mathrm{sen}}(n)$.
Different from the design in Section IV, we consider that the CUs
are not capable of canceling the interference caused by sensing signals
$\boldsymbol{h}_{k}^{H}\boldsymbol{x}_{\mathrm{sen}}(n)$. As a result,
the received SINR becomes $\frac{\boldsymbol{h}_{k}^{H}\boldsymbol{w}\boldsymbol{w}^{H}\boldsymbol{h}_{k}}{\boldsymbol{h}_{k}^{H}\boldsymbol{S}_{s}\boldsymbol{h}_{k}+\sigma^{2}}$.
The joint beamforming designs in this case correspond to problems
(P3) and (P4) by replacing \eqref{eq:rate3} as \vspace{-0.1cm}
\begin{equation}
\underset{k\in\mathcal{K}}{\min}\Big\{\log_{2}\Big(1+\frac{\boldsymbol{h}_{k}^{H}\boldsymbol{w}\boldsymbol{w}^{H}\boldsymbol{h}_{k}}{\boldsymbol{h}_{k}^{H}\boldsymbol{S}_{s}\boldsymbol{h}_{k}+\sigma^{2}}\Big)\Big\}\geq\bar{R}.\label{eq:SINR2}
\end{equation}
Although the resultant beamforming optimization problems are non-convex,
they can be solved similarly as for (P3) and (P4) by using the SCA
technique. We skip the details for brevity and the interested readers
can refer to \cite{ren2022fundamental} for the detailed algorithm
for solving the problem for Scenario I. 
\end{itemize}
\vspace{-0.1cm}

First, we consider Scenario I, in which Rayleigh fading is adopted
\cite{liu2021cramer} for modeling the communication channels from
the BS to the CUs. $\boldsymbol{h}_{k}$ is a CSCG random vector with
zero mean and identity covariance matrix. The total transmit power
budget at the BS is set as $P=10\textrm{ dBm}$ unless otherwise stated.
Fig. \ref{fig:1} shows the achieved C-R tradeoff performance, where
we set $N_{t}=N_{r}=10$ and $K=3$.\textcolor{blue}{{} }It is observed
that as the CRB increases, the achievable rate by the optimal covariance
solution approaches the asymptotical multicast capacity but can never
reach it. This is due to the fact that in this setup, the rate-maximization
transmit covariance $\boldsymbol{S}_{x}^{\mathrm{com}}$ (see Remark
\ref{remark1}) is rank-deficient such that $\mathrm{CRB_{com}}\rightarrow\infty$.
It is also observed that the optimal covariance solution significantly
outperforms the other three schemes and the proposed transmit beamforming
with sensing interference cancellation outperforms that without sensing
interference cancellation. This is because in this scenario, full-rank
transmission with multiple dedicated sensing beams is necessary for
effective target estimation, which may result in harmful sensing interference
making the sensing interference cancellation crucial.

\begin{figure}
\includegraphics[scale=0.5]{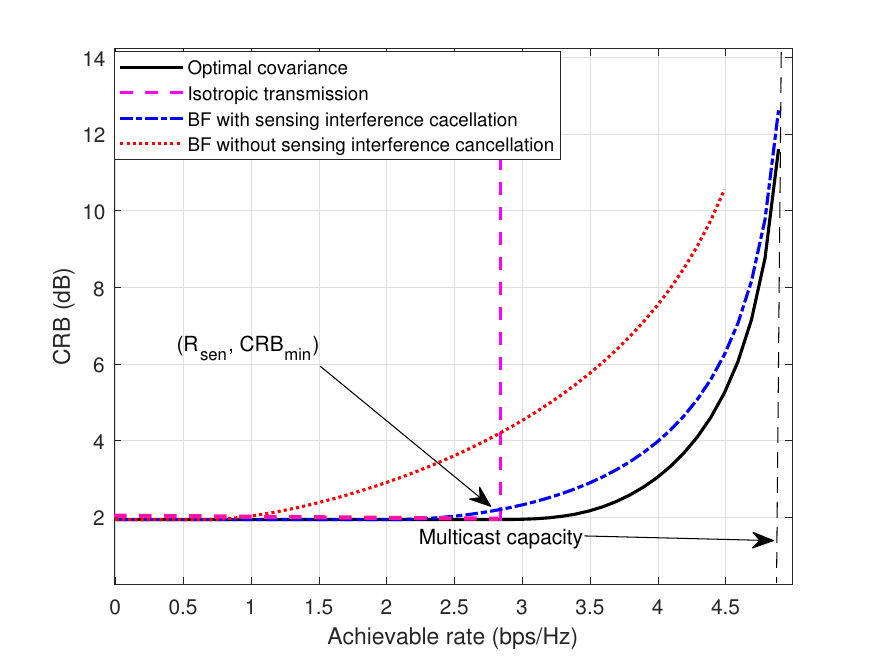}\centering\vspace{-0.3cm}
\caption{\label{fig:1}C-R tradeoff in Scenario I with $N_{t}=N_{r}=10$ and
$K=3$.}
\vspace{-0.8cm}
\end{figure}
\vspace{-0.1cm}

Figs. \ref{fig:2} and \ref{fig:3} show the achieved C-R tradeoff
performance with $N_{t}=N_{r}=4$, where $K=50$ and $K=500$, respectively.
In both figures, it is observed that when the CRB becomes sufficiently
large, the optimal covariance solution reaches the multicast capacity.
This is because the rate-maximization transmit covariance $\boldsymbol{S}_{x}^{\mathrm{com}}$
is of full-rank such that $\mathrm{CRB_{com}}$ is finite. Furthermore,
it is observed that the two transmit beamforming designs perform much
worse than the optimal covariance solution and the isotropic transmission
(w.r.t. the rate and CRB), while the performance gaps become more
substantial when $K$ becomes larger (i.e., $K=500$ in Fig. \ref{fig:3}
compared with $K=50$ in Fig. \ref{fig:2}). This is due to the fact
it is more likely that some CUs may have wireless channels orthogonal
to the information beam with the increasing number of CUs, thus making
the transmit beamforming design less efficient.

\begin{figure}
\centering%
\begin{minipage}[t]{0.48\textwidth}%
\includegraphics[scale=0.5]{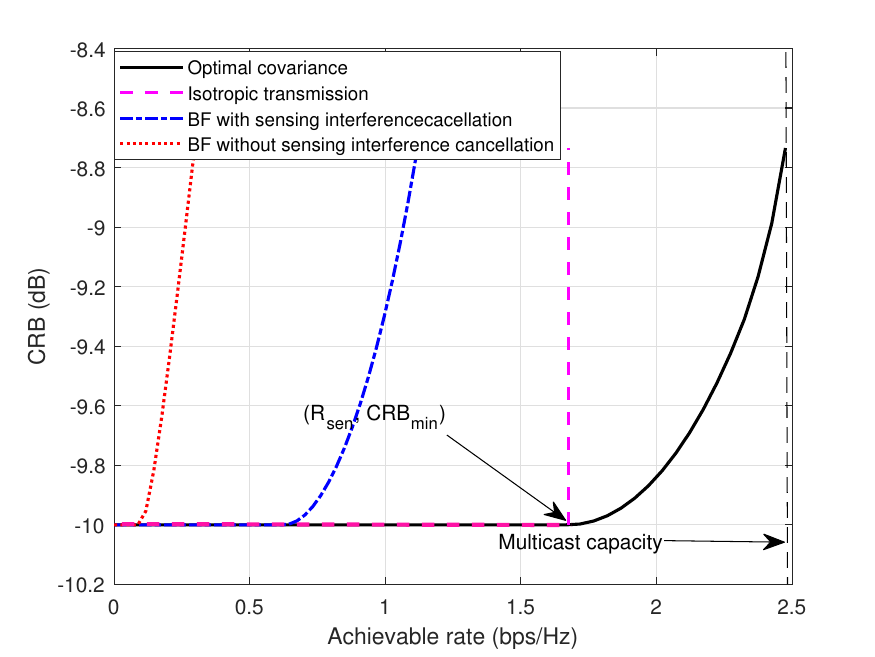}\centering\vspace{-0.3cm}
\caption{\label{fig:2}C-R tradeoff in Scenario I with $N_{t}=N_{r}=4$ and
$K=50$.}
\end{minipage}\hspace{0.15in}%
\begin{minipage}[t]{0.48\textwidth}%
 \centering\includegraphics[scale=0.5]{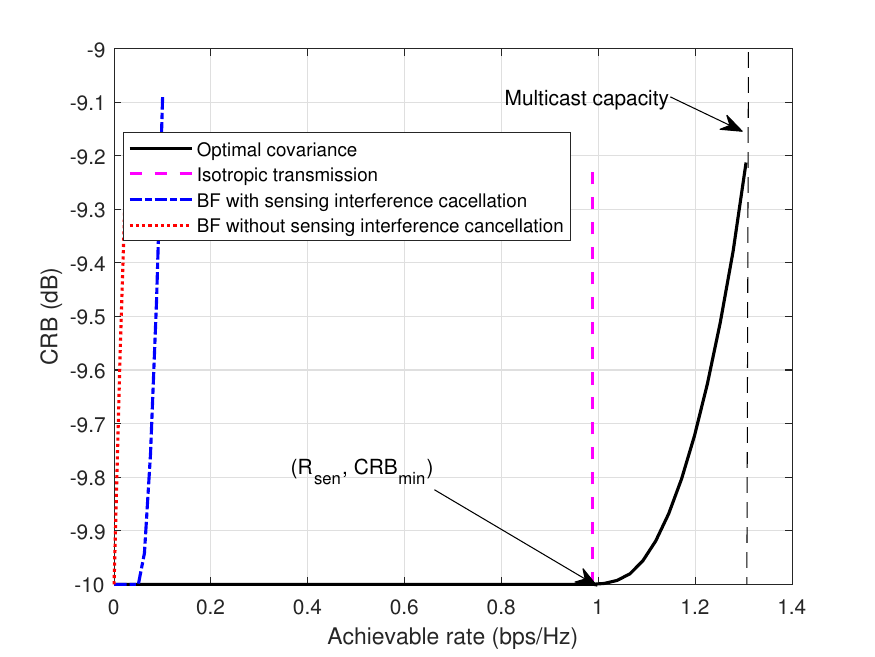}\vspace{-0.3cm}
\caption{\label{fig:3}C-R tradeoff in Scenario I with $N_{t}=N_{r}=4$ and
$K=500$.}
\end{minipage}\vspace{-0.8cm}
\end{figure}
\vspace{-0.1cm}

Fig. \ref{fig:4} shows the estimation CRB versus the transmit power
$P$, where $\bar{R}=1.2\textrm{ bps/Hz}$, $N_{t}=N_{r}=10$, and
$K=10$. It is observed that in the low transmit power regime, the
transmit beamforming with sensing interference cancellation performs
close to that without sensing interference cancellation, and inferior
to the optimal covariance solution. This is becasue in this case,
most power needs to be allocated for communication, thus making the
sensing interference cancellation less efficient. By contrast, in
the high transmit power regime, the transmit beamforming with sensing
interference cancellation is observed to perform close to the optimal
covariance solution and significantly outperform that without sensing
interference cancellation. This is due to the fact that in high SNR
regime, a single communication beam with equal power allocation is
able to satisfy the communication requirement. As a result, the sensing-optimal
isotropic transmission is adopted for both BF with sensing interference
cancellation and optimal design. This highlights the importance of
sensing interference cancellation in this case.

\begin{figure}
\includegraphics[scale=0.5]{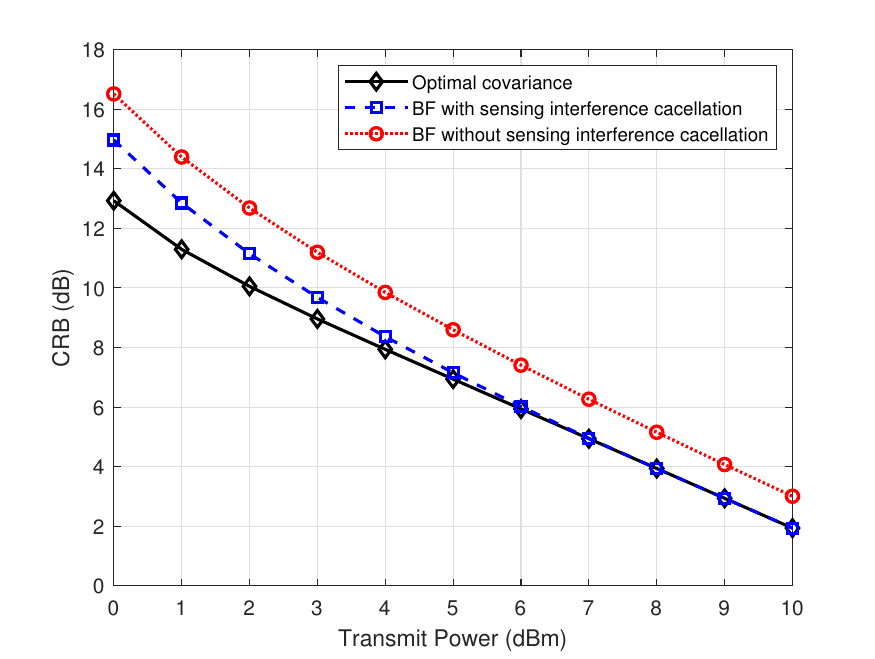}\centering\vspace{-0.3cm}
\caption{\label{fig:4}The CRB versus the transmit power $P$ in Scenario I,
where $\bar{R}=1.2\textrm{ bps/Hz}$, $N_{t}=N_{r}=10,$ and $K=10$.}
\vspace{-0.8cm}
\end{figure}

Next, we consider Scenario II. In this scenario, there are $M=2$
target located at $30{^\circ}$ and $-30{^\circ}$, with the reflection
coefficients being normalized to be $\beta_{1}=\beta_{2}=1$ \cite{xu2008target}.
We consider Rician fading for the wireless channel from the BS to
each CU, i.e., $\boldsymbol{h}_{k}=\sqrt{\frac{K_{r}}{K_{r}+1}}\boldsymbol{a}(\boldsymbol{\theta}_{K}[k])+\sqrt{\frac{1}{K_{r}+1}}\boldsymbol{h}_{k,\mathrm{NLoS}}$,
where $K_{r}=4\textrm{ dB}$ denotes the Rician factor, $\boldsymbol{\theta}_{K}[k]$
denotes the AoD for CU $k\in\mathcal{K}$, and $\boldsymbol{h}_{k,\mathrm{NLoS}}$
is a CSCG random vector with zero mean and identity covariance matrix.

Figs. \ref{fig:5} and \ref{fig:6} show the achieved C-R tradeoff
with $N_{t}=N_{r}=10,$ $P=10\textrm{ dBm}$, and $K=10$ CUs, where
the CU channels are correlated and uncorrelated with the sensing channels
(i.e., $\{\boldsymbol{\theta}_{K}[k]\}$ are randomly sampled in $\pm30{^\circ}\pm2{^\circ}$
and $\pm60{^\circ}\pm2{^\circ}$), respectively. It is observed in
Fig. \ref{fig:5} that the achieved CRB almost remains unchanged when
the rate threshold $\bar{R}$ is less than $3.5\textrm{ bps/Hz}$.
This is due to the fact that the rate constraint is inactive in this
regime, and thus we can minimize the CRB and achieve a satisfactory
communication rate concurrently by exploiting the correlated communication
and sensing channels. On the other hand, it is observed in Fig. \ref{fig:6}
that the achieved CRB monotonically increases as the required achievable
rate becomes larger. This is because when the communication and sensing
channels are less correlated, the communication and sensing signals
can hardly be reused for optimizing both performances, thus leading
to a more evident C-R tradeoff.

\begin{figure}
\centering%
\begin{minipage}[t]{0.48\textwidth}%
\includegraphics[scale=0.5]{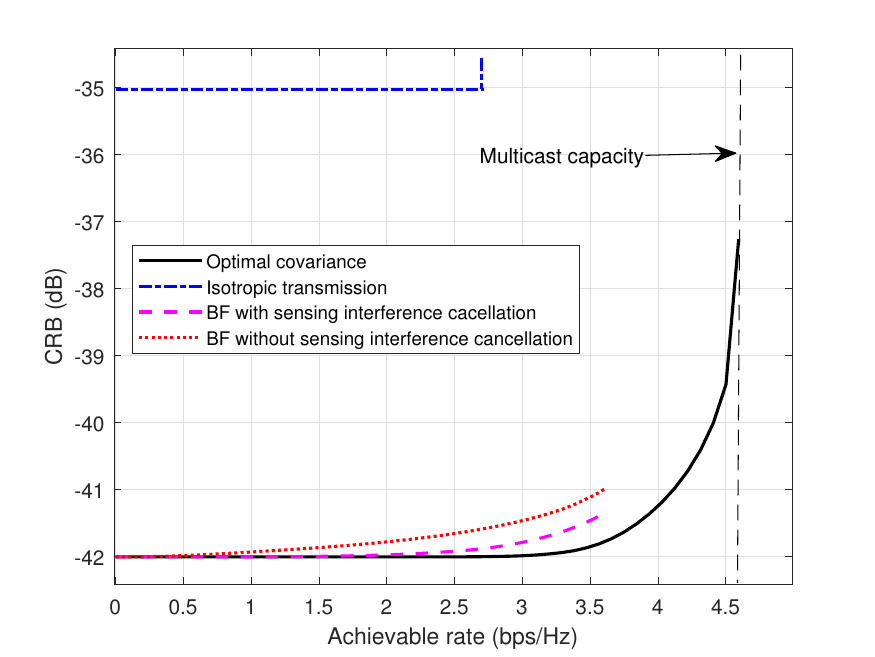}\vspace{-0.3cm}
\caption{\label{fig:5}C-R tradeoff in Scenario II with correlated communication
and sensing channels, where $K=10$, $N_{t}=N_{r}=10,$ and $P=10\textrm{ dBm}$.}
\end{minipage}\hspace{0.15in}%
\begin{minipage}[t]{0.48\textwidth}%
 \centering\includegraphics[scale=0.5]{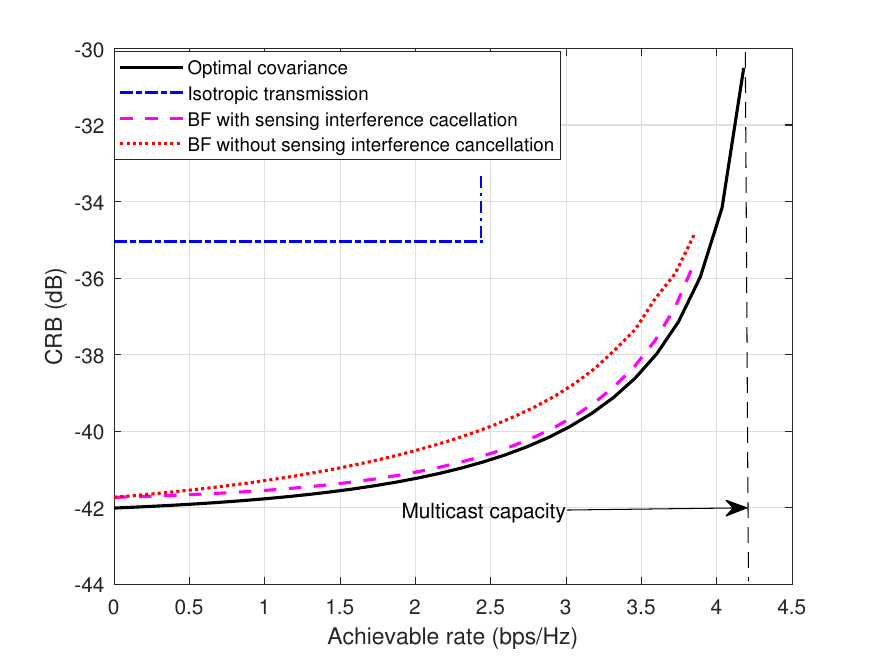}\vspace{-0.3cm}
\caption{\label{fig:6}C-R tradeoff in Scenario II with uncorrelated communication
and sensing channels, where $K=10$, $N_{t}=N_{r}=10,$ and $P=10\textrm{ dBm}$.}
\end{minipage}\vspace{-0.8cm}
\end{figure}
\vspace{-0.1cm}

Fig. \ref{fig:7} shows the CRB versus the transmit power ${P}$ for
Scenario II, with $\bar{R}=1.2\textrm{ bps/Hz}$, $N_{t}=N_{r}=10,$
and $K=10$, where the CUs' angles $\{\boldsymbol{\theta}_{K}[k]\}$
are randomly sampled between $-90{^\circ}$ and $90{^\circ}$. It
is observed that the proposed transmit beamforming with sensing interference
cancellation performs close to that without sensing interference cancellation
when $P$ is small and approaches that of the optimal covariance solution
when $P$ is large. This phenomenon is similar to Fig. \ref{fig:4}
for Scenario I, which can be similarly explained.

\begin{figure}
\includegraphics[scale=0.5]{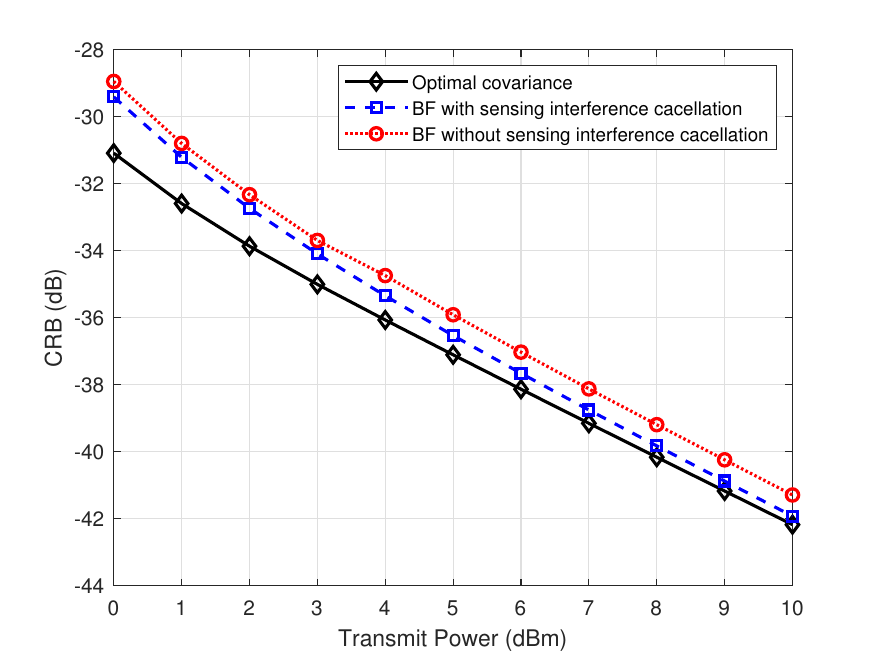}\centering\vspace{-0.3cm}
\caption{\label{fig:7}The CRB versus the transmit power $P$ in Scenario II,
where $\bar{R}=1.2\textrm{ bps/Hz}$, $N_{t}=N_{r}=10$, and $K=10$.}
\vspace{-0.8cm}
\end{figure}
\vspace{-0.2cm}

\subsection{Target Estimation Performance}

This subsection shows the exact estimation performance by considering
the practical estimators, i.e., least square (LS) estimator for Scenario
I and the Capon approximate maximum likelihood (CAML) estimator for
Scenario II \cite{liu2021cramer,xu2008target}. In the simulation,
we generate the transmitted signal $\boldsymbol{x}(n)$ based on the
given transmit covariance $\boldsymbol{S}_{x}$ as\vspace{-0.3cm}
\begin{equation}
\boldsymbol{x}(n)=\boldsymbol{V}\boldsymbol{u},
\end{equation}
where $\boldsymbol{S}_{x}=\boldsymbol{V}\boldsymbol{V}^{H}$ is the
Cholesky decomposition for $\boldsymbol{S}_{x}$ and $\boldsymbol{u}$
is a CSCG random vector with zero mean and identity covariance. In
this case, we obtain the received signal in \eqref{eq:Received signal for sensing}.
Accordingly, for Scenario I, the LS estimation of $\tilde{\boldsymbol{g}}$
based on \eqref{eq:Received signal for sensing} is obtained as \vspace{-0.3cm}
\begin{equation}
\tilde{\boldsymbol{g}}_{\mathrm{est}}=\big(\boldsymbol{X}^{c}\boldsymbol{X}^{T}\otimes\boldsymbol{I}_{N_{r}}\big)^{-1}(\boldsymbol{X}^{T}\otimes\boldsymbol{I}_{N_{r}})^{H}\tilde{\boldsymbol{y}}.
\end{equation}
In particular, we can find that $\tilde{\boldsymbol{g}}_{\mathrm{est}}\sim\mathcal{CN}\big(\tilde{\boldsymbol{g}},\sigma_{r}^{2}(\boldsymbol{X}^{c}\boldsymbol{X}^{T}\otimes\boldsymbol{I}_{N_{r}})^{-1}\big)$.
As a result, $\tilde{\boldsymbol{g}}_{\mathrm{est}}$ is a minimum
variance unbiased (MVU) estimator thus is CRB-achieving. For Scenario
II, we estimate $\boldsymbol{\beta}$ and $\boldsymbol{\theta}$ simultaneously
by adopting the CAML estimation \cite{xu2008target}, for which 20,000
angle grids in $\left[-90{^\circ},90{^\circ}\right]$ are considered
to ensure the resolution.

For comparison, we consider the beampattern-based design \cite{StoPETLiJ07}
as the benchmark scheme. The basic idea of beampattern-based design
is to allocate more power in desired target directions. Let $\boldsymbol{\theta}_{N}\in\mathbb{R}^{N\times1}$
denote the angle grid vector of interest, which is given as $[-\frac{\pi}{2}:\frac{\pi}{200}:\frac{\pi}{2}]$,
and $\boldsymbol{q}\in\mathbb{R}^{N\times1}$ denote the desired beampattern
in this grid. For Scenario I, we set\vspace{-0.3cm}
\begin{equation}
\boldsymbol{q}[n]=1,\forall n.
\end{equation}
For Scenario II, we set \vspace{-0.3cm}
\begin{equation}
\boldsymbol{q}[n]=\begin{cases}
1, & \exists m\in\mathcal{M},|\theta_{m}-\boldsymbol{\theta}_{N}[n]|\leq\Delta_{\theta},\\
0, & \textrm{others},
\end{cases}
\end{equation}
where $\Delta_{\theta}$ is a pre-defined angle width. Accordingly,
the beampattern-based design aims to minimize the beampattern matching
error while ensuring the multicast rate constraint at CUs, which is
formulated as\vspace{-0.3cm}
\begin{eqnarray}
 & \underset{\boldsymbol{S}_{x}\succeq\boldsymbol{0},\eta}{\min} & \sum_{i=1}^{N}|\eta\boldsymbol{q}[i]-\boldsymbol{a}^{H}(\boldsymbol{\theta}_{N}[i])\boldsymbol{S}_{x}\boldsymbol{a}^{H}(\boldsymbol{\theta}_{N}[i])|^{2}\nonumber \\
 & \mathrm{s.t.} & \underset{k\in\mathcal{K}}{\min}\Big\{\log_{2}\Big(1+\frac{\boldsymbol{h}_{k}^{H}\boldsymbol{S}_{x}\boldsymbol{h}_{k}}{\sigma^{2}}\Big)\Big\}\geq\bar{R},\nonumber \\
 &  & \mathrm{tr}(\boldsymbol{S}_{x})\leq P.\label{eq:BME}
\end{eqnarray}
Problem \eqref{eq:BME} can be transformed as a convex problem that
can be optimally solved via CVX.

In the simulation, we obtain the results by averaging over 1000 Monte
Carlo realizations. We suppose that there are $M=2$ targets located
at $-30{^\circ},30{^\circ}$. We set $N_{t}=N_{r}=10,K=10$, and $\bar{R}=0.5\textrm{ bps/Hz}$.

First, we evaluate the estimation performance in Scenario I. Fig.
\ref{fig:8} shows the estimation RMSE and the CRB versus the transmit
power $P$ with $L=64$, in which the actual RMSE and the theoretical
CRB using $\boldsymbol{S}_{x}$, and actually average CRB using $\frac{1}{L}\boldsymbol{XX}^{H}$
are shown. It is observed that our considered CRB-based design achieves
much better sensing performance than the beampattern-based benchmark.
Furthermore, the actual RMSE and actual CRB are observed to be almost
identical. This is due to the fact that the LS estimator is adopted
to estimate $\tilde{\boldsymbol{g}}$, which is generally a CRB-achieving
estimator.

Fig. \ref{fig:9} shows the RMSE and the CRB versus the transmit sequence
length $L$ for Scenario I, where $P=15\textrm{ dBm}$. It is observed
that the gap between the RMSE/CRB achieved by the sample covariance
$\frac{1}{L}\boldsymbol{XX}^{H}$ versus that by covariance $\boldsymbol{S}_{x}$
becomes negligible when the sequence length $L$ is greater than 64.
This means that the approximation in \eqref{eqn:app} is quite accurate
when $L\geq64$ in Scenario I.

\begin{figure}
\centering%
\begin{minipage}[t]{0.48\textwidth}%
\includegraphics[scale=0.5]{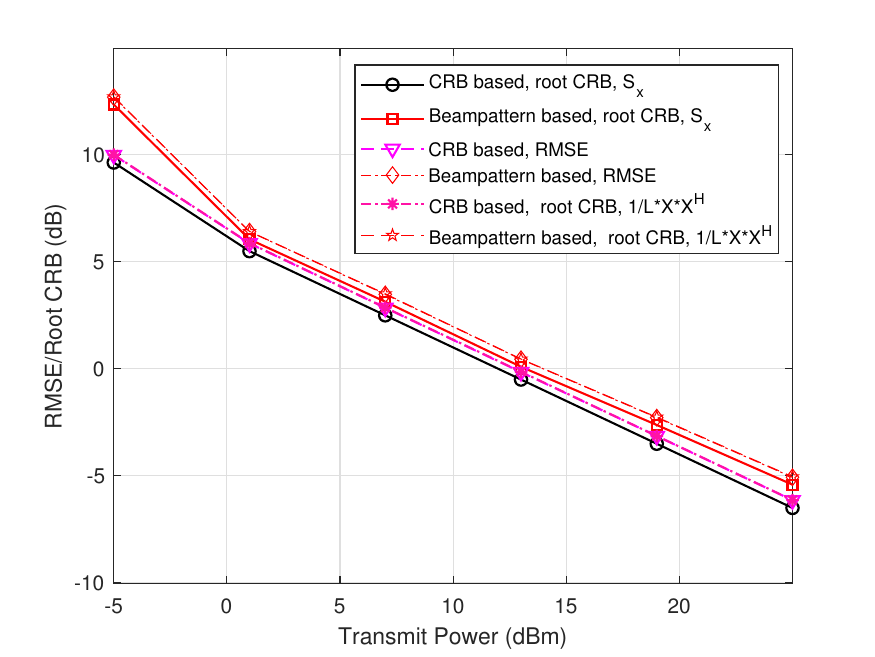}\centering\vspace{-0.3cm}
\caption{\label{fig:8}Estimation RMSE or root CRB versus transmit power $P$
for Scenario I, where $L=64$.}
\vspace{-0.1cm}
\end{minipage}\hspace{0.15in}%
\begin{minipage}[t]{0.48\textwidth}%
 \centering\includegraphics[scale=0.5]{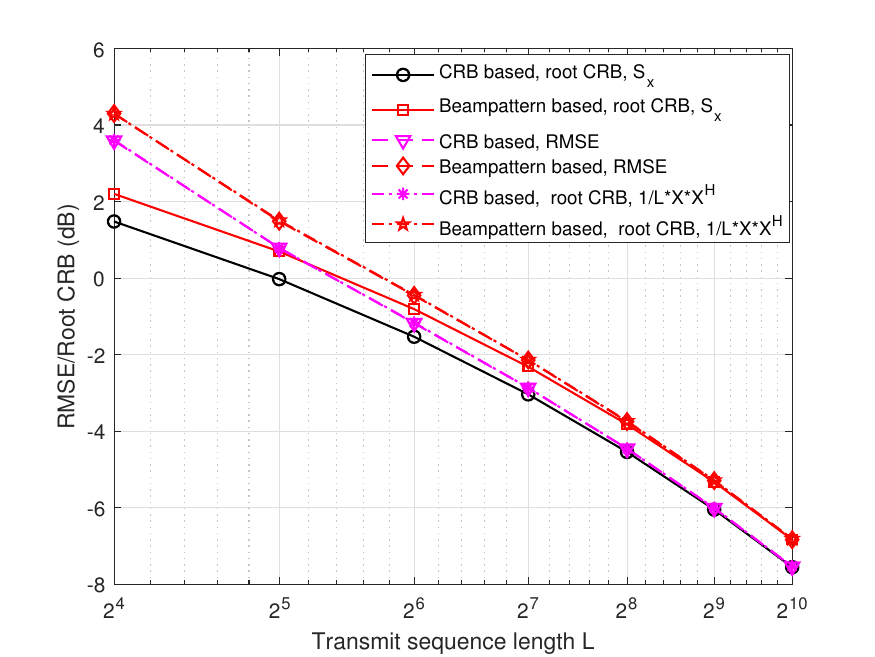}\vspace{-0.3cm}
\caption{\label{fig:9}Estimation RMSE or root CRB versus transmit sequence
length $L$ for Scenario I, where $P=15\textrm{ dBm}$.}
\vspace{-0.1cm}
\end{minipage}\vspace{-0.8cm}
\end{figure}

Next, we evaluate the estimation performance for Scenario II. Fig.
\ref{fig:10} shows the RMSE or the root CRB of angle estimation versus
the transmit power $P$ with $L=64$. It is observed that the CRB-based
design improves the practical estimation performance as compared to
the beampattern-based design. It is also observed that the actual
CRB using $\frac{1}{L}\boldsymbol{XX}^{H}$ is well approached by
the estimation RMSE. In this case, the gap between the RMSE and the
theoretical CRB using $\boldsymbol{S}_{x}$ mostly originates from
the approximation error.

Fig. \ref{fig:11} shows the estimation RMSE or root CRB of angle
estimation versus the transmit sequence length $L$ with $P=15\textrm{ dBm}$.
It is observed that for the CRB-based design, the achieved estimation
RMSE nearly achieves the CRB using $\boldsymbol{S}_{x}$ with $L\geq64$.
However, for the beampattern-based design, the achieved RMSE is unable
to achieve the corresponding CRB. This shows the benefits of CRB-based
design again.

Fig. \ref{fig:12} shows the RMSE or root CRB of amplitude estimation
with $L=64$. It is observed that there always exists a gap between
RMSE and CRB both in CRB- and beampattern-based designs. This gap
comes from the CAML estimator. It is also observed that the CRB-based
design actually improves the estimation RMSE around $4\textrm{ dB }$
compared to beampattern-based design.

Fig. \ref{fig:13} shows the estimation RMSE or root CRB of amplitude
estimation versus the transmit sequence length $L$ with $P=15\textrm{ dBm}$.
It is observed that there is almost no gap of CRB using $\frac{1}{L}\boldsymbol{XX}^{H}$
and $\boldsymbol{S}_{x}$ with $L\geq16$ while there is always around
$2~\textrm{dB}$ gap between the RMSE of amplitude estimation and
CRB, which is due to the CAML estimator.

\begin{figure}
\centering%
\begin{minipage}[t]{0.48\textwidth}%
\includegraphics[scale=0.5]{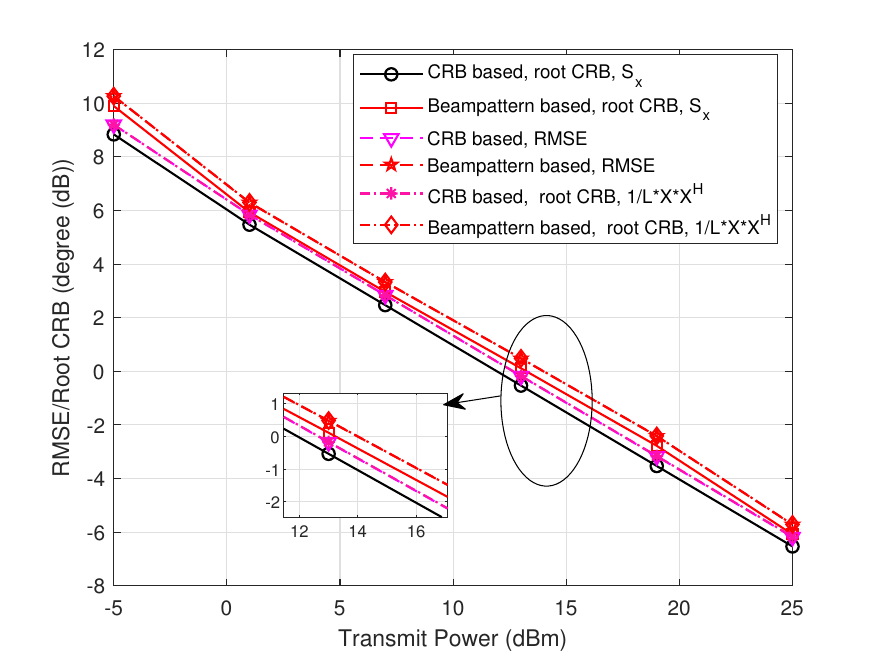}\centering\vspace{-0.3cm}
\caption{\label{fig:10}Estimation RMSE or root CRB of angle estimation versus
transmit power $P$ with $L=64$ in Senario II.}
\vspace{-0.1cm}
\end{minipage}\hspace{0.15in}%
\begin{minipage}[t]{0.48\textwidth}%
 \centering\includegraphics[scale=0.5]{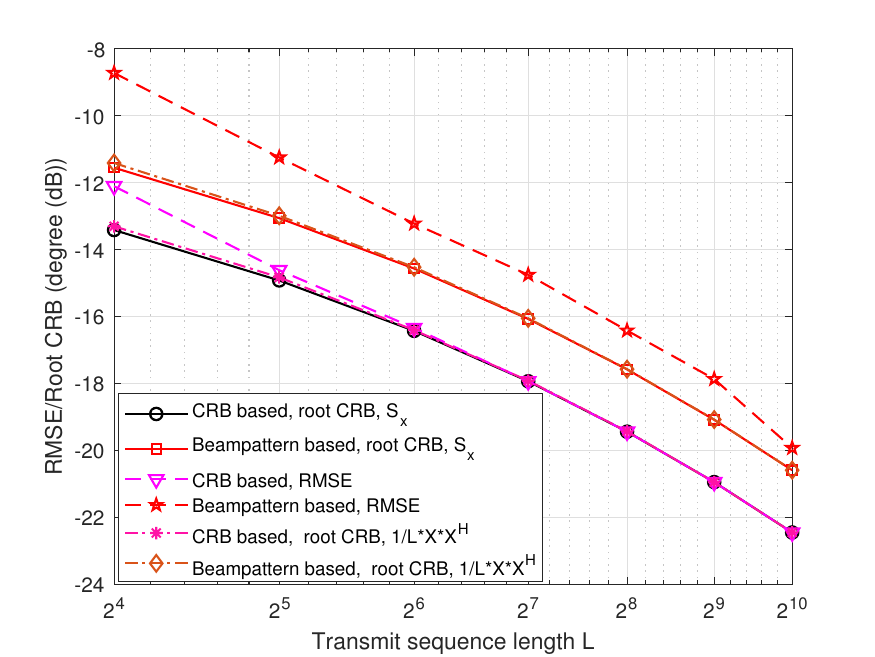}\vspace{-0.3cm}
\caption{\label{fig:11}Estimation RMSE or root CRB of angle estimation versus
transmit sequence length $L$ with $P=15\textrm{ dBm}$ in Senario
II.}
\vspace{-0.1cm}
\end{minipage}\vspace{-0.8cm}
\end{figure}

\begin{figure}
\centering%
\begin{minipage}[t]{0.48\textwidth}%
\includegraphics[scale=0.5]{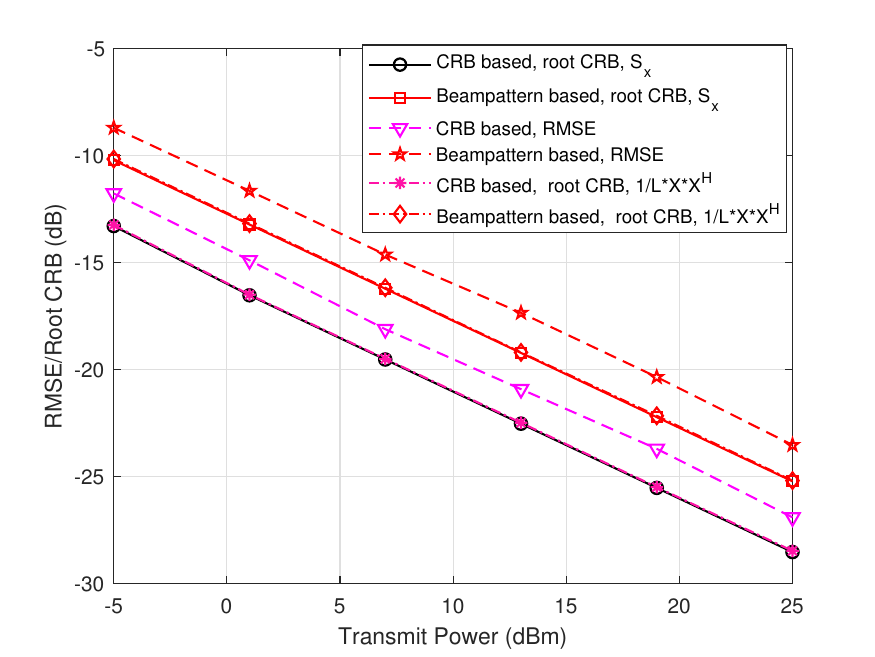}\centering\vspace{-0.3cm}
\caption{\label{fig:12}Estimation RMSE or root CRB of amplitude estimation
versus transmit power $P$ with $L=64$ in Senario II.}
\vspace{-0.1cm}
\end{minipage}\hspace{0.15in}%
\begin{minipage}[t]{0.48\textwidth}%
 \centering\includegraphics[scale=0.5]{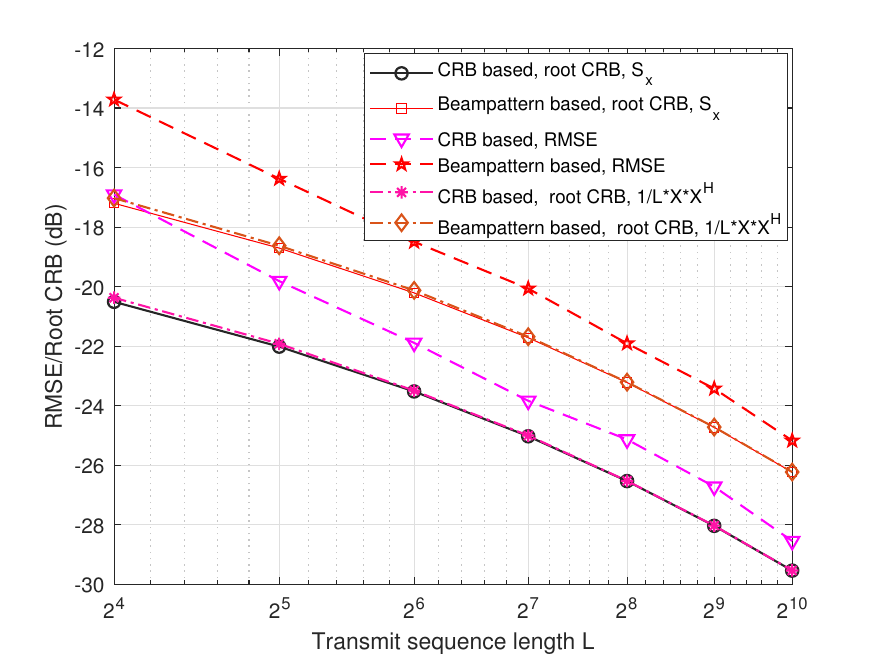}\vspace{-0.3cm}
\caption{\label{fig:13}Estimation RMSE or root CRB of amplitude estimation
versus transmit sequence length $L$ with $P=15\textrm{ dBm}$ in
Senario II.}
\vspace{-0.1cm}
\end{minipage}\vspace{-0.8cm}
\end{figure}

\section{Conclusion}

We studied the fundamental C-R tradeoff in multi-antenna multi-target
ISAC over multicast channels. In particular, we considered two scenarios,
namely Scenario I and Scenario II, in which the BS does not and does
have prior knowledge about targets, respectively. For both scenarios,
we characterized the fundamental C-R tradeoff by solving multicast-rate-constrained
CRB minimization problems via optimizing the transmit covariance matrix.
Next, we considered  joint information and sensing beamforming designs,
by exploiting dedicated sensing signal beams together with sensing
interference cancellation. Finally, we provided numerical results
to reveal the C-R tradeoff performance and practical estimation performance
in our considered scenarios. 

There are several interesting research directions for future multicast
ISAC networks. One particularly promising direction is expanding the
design to multi-group multicast channels, where the presence of inter-group
interference cannot be overlooked. Next, extending our work to a multi-cell
scenario is another interesting direction, in which different BSs
need to coordinate their signals effectively to mitigate multi-cell
interference. Moreover, investigating multicast ISAC with other sensing
tasks represents another interesting aspect, as it involves employing
diverse sensing metrics. Additionally, regarding to secure multicast
ISAC networks, taking into account secrecy capacity, holds potential
for future exploration. This specific area requires further investigation
to ensure robust and secure communications. \vspace{-0.2cm}

\appendices{}

\section{Proof of Proposition 1\vspace{-0.2cm}
}

Let $\{\mu_{k}\geq0\}$ and $\lambda\ge0$ denote the dual variables
associated with the constraints in \eqref{eq:rate constraint} and
\eqref{eq:Power constraint}, respectively. Then, the Lagrangian of
problem (P1.1) is\vspace{-0.2cm}
\begin{align}
\mathcal{L}(\boldsymbol{S}_{x},\lambda,\{\mu_{k}\})= & \mathrm{tr}(\boldsymbol{S}_{x}^{-1})+\Gamma\sum_{k=1}^{K}\mu_{k}+\mathrm{tr}\big((\lambda\boldsymbol{I}-\sum_{k=1}^{K}\mu_{k}\boldsymbol{h}_{k}\boldsymbol{h}_{k}^{H})\boldsymbol{S}_{x}\big)-\lambda P.
\end{align}
\vspace{-0.2cm}
Accordingly, the dual function is \vspace{-0.2cm}
\begin{eqnarray}
g\big(\lambda,\{\mu_{k}\}\big)= & \underset{\boldsymbol{S}_{x}\succeq\boldsymbol{0}}{\min} & \mathcal{L}(\boldsymbol{S}_{x},\lambda,\{\mu_{k}\}),\label{eq:dual function}
\end{eqnarray}
\vspace{-0.2cm}
for which we have the following lemma.\vspace{-0.2cm}

\begin{lem}
\textup{For $g\big(\lambda,\{\mu_{k}\}\big)$ to be bounded from below,
it must hold that $\boldsymbol{A}(\lambda,\{\mu_{k}\})\overset{\triangle}{=}\lambda\boldsymbol{I}-\sum_{k=1}^{K}\mu_{k}\boldsymbol{h}_{k}\boldsymbol{h}_{k}^{H}\succeq\boldsymbol{0}$.}\vspace{-0.2cm}
\end{lem}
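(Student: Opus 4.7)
The plan is a standard contrapositive argument: I will show that if $\boldsymbol{A}(\lambda,\{\mu_k\})$ has a negative eigenvalue, then the inner minimization in \eqref{eq:dual function} is $-\infty$, so boundedness of $g$ forces $\boldsymbol{A}(\lambda,\{\mu_k\})\succeq\boldsymbol{0}$. The Lagrangian splits into three $\boldsymbol{S}_x$-dependent or constant pieces:
\begin{equation}
\mathcal{L}(\boldsymbol{S}_x,\lambda,\{\mu_k\})=\mathrm{tr}(\boldsymbol{S}_x^{-1})+\mathrm{tr}\big(\boldsymbol{A}(\lambda,\{\mu_k\})\boldsymbol{S}_x\big)+\Big(\Gamma\sum_{k=1}^{K}\mu_k-\lambda P\Big),
\end{equation}
where the first term is always nonnegative on the feasible set $\boldsymbol{S}_x\succ\boldsymbol{0}$ and the last term is a constant. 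Hence the only way to drive $\mathcal{L}$ to $-\infty$ is through the linear term $\mathrm{tr}(\boldsymbol{A}\boldsymbol{S}_x)$, and this is precisely where the sign of $\boldsymbol{A}$ enters.

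Suppose for contradiction that $\boldsymbol{A}(\lambda,\{\mu_k\})\not\succeq\boldsymbol{0}$, so there exists a unit vector $\boldsymbol{v}\in\mathbb{C}^{N_t\times1}$ with $\boldsymbol{v}^H\boldsymbol{A}(\lambda,\{\mu_k\})\boldsymbol{v}=\alpha<0$. I would then plug in the one-parameter family
\begin{equation}
\boldsymbol{S}_x(t)=\boldsymbol{I}+t\,\boldsymbol{v}\boldsymbol{v}^H,\qquad t>0,
\end{equation}
which lies in the feasible set since $\boldsymbol{S}_x(t)\succ\boldsymbol{0}$. A direct eigenvalue count gives $\mathrm{tr}(\boldsymbol{S}_x(t)^{-1})=(N_t-1)+\tfrac{1}{1+t}\le N_t$, so the trace-inverse term stays bounded as $t$ grows. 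On the other hand,
\begin{equation}
\mathrm{tr}\big(\boldsymbol{A}(\lambda,\{\mu_k\})\boldsymbol{S}_x(t)\big)=\mathrm{tr}\big(\boldsymbol{A}(\lambda,\{\mu_k\})\big)+t\,\boldsymbol{v}^H\boldsymbol{A}(\lambda,\{\mu_k\})\boldsymbol{v}=\mathrm{tr}\big(\boldsymbol{A}(\lambda,\{\mu_k\})\big)+t\alpha,
\end{equation}
which tends to $-\infty$ as $t\to\infty$ since $\alpha<0$.

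Combining these two estimates, $\mathcal{L}(\boldsymbol{S}_x(t),\lambda,\{\mu_k\})\to-\infty$, so the infimum in \eqref{eq:dual function} is $-\infty$ and $g(\lambda,\{\mu_k\})$ is unbounded from below, contradicting the hypothesis. Therefore $\boldsymbol{A}(\lambda,\{\mu_k\})\succeq\boldsymbol{0}$ is necessary for $g$ to be finite, which is exactly the claim. The only delicate point in executing this plan is the choice of the perturbation $\boldsymbol{S}_x(t)$: it must simultaneously (i) remain strictly positive definite so that $\mathrm{tr}(\boldsymbol{S}_x^{-1})$ stays finite, and (ii) align its increasing eigendirection with a negative eigenvector of $\boldsymbol{A}$. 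Using a rank-one perturbation of the identity accomplishes both at once and is the cleanest way to close the argument.
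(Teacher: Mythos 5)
Your proof is correct and follows the same contrapositive strategy as the paper: pick a direction $\boldsymbol{v}$ along which $\boldsymbol{A}(\lambda,\{\mu_k\})$ is negative and push $\boldsymbol{S}_x$ to infinity along it so the linear term $\mathrm{tr}(\boldsymbol{A}\boldsymbol{S}_x)$ drags the Lagrangian to $-\infty$. The one substantive difference is the choice of test matrix, and yours is actually the more careful one. The paper sets $\boldsymbol{S}_x=\rho\hat{\boldsymbol{u}}\hat{\boldsymbol{u}}^{H}$ with $\rho\rightarrow\infty$, which is a singular (rank-one) matrix whenever $N_t>1$; the term $\mathrm{tr}(\boldsymbol{S}_x^{-1})$ is then $+\infty$ (or undefined), so the paper's Lagrangian formally reads $+\infty-\infty$ and the claimed divergence to $-\infty$ does not follow without an extra regularization step. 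Your choice $\boldsymbol{S}_x(t)=\boldsymbol{I}+t\,\boldsymbol{v}\boldsymbol{v}^{H}$ stays strictly positive definite, keeps $\mathrm{tr}(\boldsymbol{S}_x(t)^{-1})=(N_t-1)+\tfrac{1}{1+t}$ bounded, and still sends $\mathrm{tr}(\boldsymbol{A}\boldsymbol{S}_x(t))=\mathrm{tr}(\boldsymbol{A})+t\alpha$ to $-\infty$, so the argument closes cleanly. In short: same idea, but your rank-one perturbation of the identity repairs a genuine (if minor) gap in the paper's own proof.
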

\begin{proof} Suppose that $\boldsymbol{A}(\lambda,\{\mu_{k}\})\succeq\boldsymbol{0}$
does not hold. Let $\hat{\alpha}<0$ denote any one negative eigenvalue
of $\boldsymbol{A}(\lambda,\{\mu_{k}\})$ and $\hat{\boldsymbol{u}}$
denote its associated eigenvector. In this case, we can set $\boldsymbol{S}_{x}=\rho\hat{\boldsymbol{u}}\hat{\boldsymbol{u}}^{H}$
with $\rho\rightarrow\infty$ such that $\mathcal{L}(\boldsymbol{S}_{x},\lambda,\{\mu_{k}\})\rightarrow-\infty$,
i.e., we have $g\big(\lambda,\{\mu_{k}\}\big)\rightarrow-\infty$.
This thus contradicts the fact that $g\big(\lambda,\{\mu_{k}\}\big)$
is bounded from below. Therefore, $\boldsymbol{A}(\lambda,\{\mu_{k}\})\succeq\boldsymbol{0}$
must hold true. \end{proof} Based on Lemma 2, the dual problem of
(P1.1) is given by \vspace{-0.2cm}
\begin{subequations}
\begin{align}
\textrm{(D1.1)}:\underset{\lambda,\{\mu_{k}\}}{\max} & ~g\big(\lambda,\{\mu_{k}\}\big)\nonumber \\
\mathrm{s.t.} & ~\boldsymbol{A}(\lambda,\{\mu_{k}\})\succeq\boldsymbol{0},\label{eq:Amatrix}\\
 & \lambda\geq0,\mu_{k}\geq0,\forall k\in\mathcal{K}.\label{eq:nonnegative}
\end{align}
\end{subequations}
 For notational convenience, let $\mathcal{D}$ denote the feasible
region of $\lambda$ and $\{\mu_{k}\}$ characterized by \eqref{eq:Amatrix}
and \eqref{eq:nonnegative}. Let $\boldsymbol{S}_{x}^{*}$ denote
the optimal solution to problem \eqref{eq:dual function} with given
$(\lambda,\{\mu_{k}\})\in\mathcal{D}$. Furthermore, let $\boldsymbol{S}_{x}^{\mathrm{opt}}$
denote the optimal primal solution to problem (P1), and $\lambda^{\mathrm{opt}}$
and $\{\mu_{k}^{\mathrm{opt}}\}$ denote the optimal dual solution
to problem (D1.1).

As the strong duality holds between problem (P1.1) and its dual problem
(D1.1), problem (P1.1) can be solved by equivalently solving dual
problem (D1.1). In the following, we first solve problem \eqref{eq:dual function}
with a given set $(\lambda,\{\mu_{k}\})\in\mathcal{D}$, then find
$\lambda^{\mathrm{opt}}$ and $\{\mu_{k}^{\mathrm{opt}}\}$ for problem
(D1.1), and finally obtain the optimal primary solution $\boldsymbol{S}_{x}^{\mathrm{opt}}$
to problem (P1.1).

First, we evaluate the dual function $g\big(\lambda,\{\mu_{k}\}\big)$
with any given $(\lambda,\{\mu_{k}\})\in\mathcal{D}$. To this end,
we suppose that $\mathrm{rank}(\boldsymbol{A}(\lambda,\{\mu_{k}\}))=N\le N_{t}$
without loss of generality. Accordingly, we express the \ac{evd}
of $\boldsymbol{A}(\lambda,\{\mu_{k}\})$ as $\boldsymbol{A}(\lambda,\{\mu_{k}\})=\boldsymbol{U}\boldsymbol{\Lambda}\boldsymbol{U}^{H}$,
where $\boldsymbol{U}\boldsymbol{U}^{H}=\boldsymbol{U}^{H}\boldsymbol{U}=\boldsymbol{I}$,
and $\boldsymbol{\Lambda}=\mathrm{diag}(\alpha_{1},\dots,\alpha_{N_{t}})$
with $\alpha_{1}\ge\dots\ge\alpha_{N_{t}}$ being the eigenvalues
of $\boldsymbol{A}(\lambda,\{\mu_{k}\})$. We then have the following
Lemma\footnote{Note that Lemma 3 provides the optimal solution to problem (56) for
any given $(\lambda,\{\mu_{k}\})\in\mathcal{D}$, for which $\mathrm{rank}(\boldsymbol{A}(\lambda,\{\mu_{k}\})<N_{t}$
may hold such that the eigenvalues of $\boldsymbol{S}_{x}^{*}$ can
be unbounded. This is different from Proposition 1, which considers
the optimal dual variables $(\lambda^{\mathrm{opt}},\{\mu_{k}^{\mathrm{opt}}\})$,
for which $\mathrm{rank}(\boldsymbol{A}(\lambda^{\mathrm{opt}},\{\mu_{k}^{\mathrm{opt}}\}))=N_{t}$
follows.}.\vspace{-0.2cm}

\begin{lem}
\label{lemma1} \textup{\label{lem:1}For any given $(\lambda,\{\mu_{k}\})\in\mathcal{D}$,
we have the optimal solution $\boldsymbol{S}_{x}^{*}$ to problem
(\ref{eq:dual function}) as}\vspace{-0.2cm}
\textup{
\begin{equation}
\boldsymbol{S}_{x}^{*}=\boldsymbol{U}{\boldsymbol{\Sigma}}\boldsymbol{U}^{H},\label{eq:opt:Sx}
\end{equation}
where ${\boldsymbol{\Sigma}}=\mathrm{diag}(\tau_{1},\dots,\tau_{N_{t}})$
with $\tau_{i}=\alpha_{i}^{-1/2}$ for $i\leq N$ and $\tau_{i}\rightarrow+\infty$
for $N<i\leq N_{t}$.}\begin{proof}Please refer to Appendix B. \end{proof}
\end{lem}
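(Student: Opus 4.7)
The plan is to exploit the separability of problem~(\ref{eq:dual function}) once the objective is expressed in the eigenbasis of $\boldsymbol{A} \triangleq \boldsymbol{A}(\lambda,\{\mu_k\})$. After dropping the constants $\Gamma\sum_k \mu_k$ and $-\lambda P$ that do not depend on $\boldsymbol{S}_x$, the inner minimization reduces to $\min_{\boldsymbol{S}_x \succ \boldsymbol{0}} \mathrm{tr}(\boldsymbol{S}_x^{-1}) + \mathrm{tr}(\boldsymbol{A}\boldsymbol{S}_x)$, which is strictly convex on the open cone of positive definite matrices; the PSD boundary will be treated by a limiting argument.

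First I would show that the optimum shares the eigenbasis $\boldsymbol{U}$ of $\boldsymbol{A}$. For any PSD $\boldsymbol{S}_x$ with spectrum $\{\tau_i\}$, the term $\mathrm{tr}(\boldsymbol{S}_x^{-1}) = \sum_i 1/\tau_i$ depends only on the spectrum, so the eigenvector choice affects only $\mathrm{tr}(\boldsymbol{A}\boldsymbol{S}_x)$. By von Neumann's trace inequality for Hermitian matrices, $\mathrm{tr}(\boldsymbol{A}\boldsymbol{S}_x) \geq \sum_{i=1}^{N_t} \alpha_i^{\downarrow}\,\tau_i^{\uparrow}$, with equality if and only if $\boldsymbol{A}$ and $\boldsymbol{S}_x$ admit a common eigenbasis paired in reverse order of magnitude. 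Hence, without loss of optimality, $\boldsymbol{S}_x = \boldsymbol{U}\,\mathrm{diag}(\tau_1,\dots,\tau_{N_t})\boldsymbol{U}^H$ with the $\tau_i$ in ascending order relative to the descending $\alpha_i$; the assignment $\tau_i = \alpha_i^{-1/2}$ automatically satisfies this pairing.

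Given this structure, the problem decouples into the separable scalar problem $\min_{\tau_i > 0}\sum_{i=1}^{N_t}(\alpha_i \tau_i + 1/\tau_i)$. For each index $i$ with $\alpha_i > 0$, elementary calculus on the strictly convex function $g(\tau) = \alpha_i \tau + 1/\tau$ yields the unique minimizer $\tau_i = \alpha_i^{-1/2}$ with minimum value $2\sqrt{\alpha_i}$. For the $N_t - N$ indices where $\alpha_i = 0$, the term $\alpha_i \tau_i$ vanishes and the remaining $1/\tau_i$ has infimum $0$ over $\tau_i > 0$ that is not attained, forcing $\tau_i \to +\infty$. Assembling these diagonal entries into $\boldsymbol{\Sigma}$ and rotating back via $\boldsymbol{U}$ produces the asserted $\boldsymbol{S}_x^{*} = \boldsymbol{U}\boldsymbol{\Sigma}\boldsymbol{U}^H$. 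As an alternative sanity check, formally setting $\nabla_{\boldsymbol{S}_x}(\mathrm{tr}(\boldsymbol{S}_x^{-1}) + \mathrm{tr}(\boldsymbol{A}\boldsymbol{S}_x)) = -\boldsymbol{S}_x^{-2} + \boldsymbol{A} = \boldsymbol{0}$ yields $\boldsymbol{S}_x^{*} = \boldsymbol{A}^{-1/2}$ whenever $\boldsymbol{A}$ is invertible, which coincides with the eigenvalue-wise formula.

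The main obstacle is a rigorous treatment of the rank-deficient case $N < N_t$, in which the proposed $\boldsymbol{S}_x^{*}$ does not belong to the PSD cone (its unbounded eigenvalues on the kernel of $\boldsymbol{A}$ make $\boldsymbol{S}_x^{-1}$ ill-defined) and the infimum in~(\ref{eq:dual function}) is not attained. I would resolve this by analyzing the sequence $\boldsymbol{S}_x^{(\rho)} = \boldsymbol{U}\,\mathrm{diag}(\alpha_1^{-1/2},\dots,\alpha_N^{-1/2},\rho,\dots,\rho)\boldsymbol{U}^H$ for $\rho > 0$, whose objective evaluates cleanly to $\sum_{i=1}^{N}2\sqrt{\alpha_i} + (N_t - N)/\rho$ and decreases monotonically to $\sum_{i=1}^{N}2\sqrt{\alpha_i}$ as $\rho \to +\infty$. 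This both computes $g(\lambda,\{\mu_k\})$ at such dual variables and is consistent with Proposition~\ref{thm:1} and its accompanying footnote, which assert that the truly optimal dual $(\lambda^{\mathrm{opt}},\{\mu_k^{\mathrm{opt}}\})$ must render $\boldsymbol{A}$ full-rank so that the infimum in the primal is actually attained.
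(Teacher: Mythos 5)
Your proof is correct, but it reaches the simultaneous-diagonalization step by a different route than the paper. The paper works in the eigenbasis of $\boldsymbol{A}(\lambda,\{\mu_k\})$, sets $\tilde{\boldsymbol{S}}_x=\boldsymbol{U}^{H}\boldsymbol{S}_x\boldsymbol{U}$, observes that the linear term $\mathrm{tr}(\boldsymbol{\Lambda}\tilde{\boldsymbol{S}}_x)$ depends only on the \emph{diagonal} of $\tilde{\boldsymbol{S}}_x$, and then invokes the inequality $\mathrm{tr}(\boldsymbol{B}_0^{-1})\geq\sum_i 1/a(i,i)$ (cited from \cite{ohno2004capacity}, equality iff $\boldsymbol{B}_0$ is diagonal) to conclude that the optimal $\tilde{\boldsymbol{S}}_x$ is diagonal. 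You instead fix the \emph{spectrum} of $\boldsymbol{S}_x$, note that $\mathrm{tr}(\boldsymbol{S}_x^{-1})$ is then determined, and use von Neumann's trace inequality to show the rotation minimizing $\mathrm{tr}(\boldsymbol{A}\boldsymbol{S}_x)$ is the common eigenbasis with reverse-order pairing --- a pairing that the assignment $\tau_i=\alpha_i^{-1/2}$ satisfies automatically, as you correctly point out. The two arguments are dual to one another: the paper's needs a (less widely known) result on the trace of the inverse versus the diagonal, while yours needs von Neumann's inequality together with its equality condition; both then reduce to the same separable scalar problem $\min_{\tau_i>0}\sum_i(\alpha_i\tau_i+1/\tau_i)$ and the same limiting treatment of the zero eigenvalues when $N<N_t$, which you handle with an explicit sequence $\boldsymbol{S}_x^{(\rho)}$ (the paper does this more informally). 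Your closing observation that $\boldsymbol{S}_x^{*}=\boldsymbol{A}^{-1/2}$ in the full-rank case is a useful sanity check not made explicit in the paper.
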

Based on Lemma \ref{lemma1}, the dual function $g\big(\lambda,\{\mu_{k}\}\big)$
is obtained. 

Next, we find the optimal dual solution $\lambda^{\mathrm{opt}}$
and $\{\mu_{k}^{\mathrm{opt}}\}$ to solve dual problem (D1.1). As
problem (D1.1) is always convex but non-differentiable in general,
we adopt sub-gradient based methods such as the ellipsoid method to
find its optimal solution. The basic idea of the ellipsoid method
is to first generate an ellipsoid containing $\lambda^{\mathrm{opt}}$
and $\{\mu_{k}^{\mathrm{opt}}\}$, and then iteratively construct
new ellipsoids containing these variables with reduced volumes, until
convergence \cite{boyd2008ellipsoid}. To successfully implement the
ellipsoid method, we only need to find the sub-gradients of the objective
and constraint functions in problem (D1.1). 

First, we consider the objective function of problem (D1.1), one sub-gradient
of which at any given $[\mu_{1},\dots,\mu_{K},\lambda]^{T}\in\mathbb{C}^{(K+1)\times1}$
is $[\mathrm{tr}\big(\boldsymbol{H}_{1}\boldsymbol{S}_{x}^{*}\big)-\Gamma,\dots,\mathrm{tr}\big(\boldsymbol{H}_{K}\boldsymbol{S}_{x}^{*}\big)-\Gamma,P-\mathrm{tr}(\boldsymbol{S}_{x}^{*})]^{T}$.
Then, we consider the constraints in \eqref{eq:nonnegative}. It is
clear that the subgradient for constraint $\mu_{k}\geq0$ is $-\boldsymbol{e}_{k},\forall k\in\mathcal{K}$,
and the subgradient for $\lambda\geq0$ is $-\boldsymbol{e}_{K+1}$.
Finally, we consider constraint $\boldsymbol{A}(\lambda,\{\mu_{k}\})\succeq\boldsymbol{0}$
in \eqref{eq:Amatrix}, whose subgradient is given in the following
lemma. For notational convenience, we define $\boldsymbol{H}_{k}=\boldsymbol{h}_{k}\boldsymbol{h}_{k}^{H},\forall k\in\mathcal{K}$.\vspace{-0.2cm}

\begin{lem}
\textup{Let $\boldsymbol{v}\in\mathbb{C}^{N_{t}\times1}$ denote the
eigenvector of $\boldsymbol{A}(\lambda,\{\mu_{k}\})$ corresponding
to its smallest eigenvalue. The sub-gradient of }$\boldsymbol{A}(\lambda,\{\mu_{k}\})\succeq\boldsymbol{0}$\textup{
in \eqref{eq:Amatrix} at given }$\big(\lambda,\{\mu_{k}\}\big)\in\mathcal{D}$
\textup{is}\vspace{-0.2cm}
\textup{
\begin{equation}
\begin{array}{c}
[\boldsymbol{v}^{H}\boldsymbol{H}_{1}\boldsymbol{v},\dots,\boldsymbol{v}^{H}\boldsymbol{H}_{K}\boldsymbol{v},-1]^{T}.\end{array}
\end{equation}
}\begin{proof} This lemma follows by noting that $\boldsymbol{A}(\lambda,\{\mu_{k}\})\succeq\boldsymbol{0}\Longleftrightarrow\boldsymbol{v}^{H}\boldsymbol{A}(\lambda,\{\mu_{k}\})\boldsymbol{v}\geq0.$
\end{proof}
\end{lem}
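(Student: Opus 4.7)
The plan is to convert the matrix-inequality constraint in \eqref{eq:Amatrix} into a scalar constraint via the smallest eigenvalue and then apply the standard sub-differential rule for the pointwise maximum of affine functions. Specifically, observe that $\boldsymbol{A}(\lambda,\{\mu_{k}\})\succeq\boldsymbol{0}$ is equivalent to $\lambda_{\min}\big(\boldsymbol{A}(\lambda,\{\mu_{k}\})\big)\geq 0$, i.e., the scalar constraint
\[
\phi\big(\lambda,\{\mu_{k}\}\big)\;\triangleq\;-\lambda_{\min}\big(\boldsymbol{A}(\lambda,\{\mu_{k}\})\big)\;\leq\;0.
\]
So a sub-gradient of the constraint in \eqref{eq:Amatrix} is nothing more than a sub-gradient of the convex function $\phi$ at $(\lambda,\{\mu_{k}\})$.

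The key step is to invoke the Rayleigh--Ritz characterization, which gives
\[
\phi\big(\lambda,\{\mu_{k}\}\big)\;=\;\max_{\|\boldsymbol{u}\|=1}\,\boldsymbol{u}^{H}\Big(\sum_{k=1}^{K}\mu_{k}\boldsymbol{H}_{k}-\lambda\boldsymbol{I}\Big)\boldsymbol{u}\;=\;\max_{\|\boldsymbol{u}\|=1}\,\Big(\sum_{k=1}^{K}\mu_{k}\boldsymbol{u}^{H}\boldsymbol{H}_{k}\boldsymbol{u}-\lambda\Big).
\]
For every fixed unit vector $\boldsymbol{u}$, the inner expression is affine in the decision variables $(\mu_{1},\dots,\mu_{K},\lambda)$, with gradient $[\boldsymbol{u}^{H}\boldsymbol{H}_{1}\boldsymbol{u},\dots,\boldsymbol{u}^{H}\boldsymbol{H}_{K}\boldsymbol{u},-1]^{T}$. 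Hence $\phi$ is convex (being a supremum of affine functions) and its sub-differential at $(\lambda,\{\mu_{k}\})$ contains the gradient evaluated at any maximizer $\boldsymbol{u}^{\star}$ in the Rayleigh--Ritz problem (this is the standard Danskin/max-rule for sub-gradients, see, e.g., Boyd \& Vandenberghe, Section 3.2).

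Finally I would identify an explicit maximizer. Since the largest eigenvalue of $-\boldsymbol{A}(\lambda,\{\mu_{k}\})$ equals $-\lambda_{\min}(\boldsymbol{A}(\lambda,\{\mu_{k}\}))$ and its associated unit eigenvector coincides with the unit eigenvector $\boldsymbol{v}$ associated with the smallest eigenvalue of $\boldsymbol{A}(\lambda,\{\mu_{k}\})$, the vector $\boldsymbol{v}$ attains the maximum in the Rayleigh--Ritz expression. Substituting $\boldsymbol{u}^{\star}=\boldsymbol{v}$ into the gradient formula above immediately yields the claimed sub-gradient $[\boldsymbol{v}^{H}\boldsymbol{H}_{1}\boldsymbol{v},\dots,\boldsymbol{v}^{H}\boldsymbol{H}_{K}\boldsymbol{v},-1]^{T}$.

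No real obstacles are anticipated: the argument is a two-line application of the Rayleigh--Ritz representation and the sub-differential rule for pointwise suprema. The only care needed is in the sign convention (writing the semidefinite constraint in the form $\phi(\cdot)\leq 0$ so that a sub-gradient in the convex-analysis sense matches what the ellipsoid method consumes), and in verifying that $\boldsymbol{v}$ does indeed achieve the supremum, which follows directly from the equality case in the Rayleigh quotient.
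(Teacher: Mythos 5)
Your argument is correct and is essentially the paper's own approach, just spelled out in full: the paper's one-line proof replaces the LMI by the scalar constraint $\boldsymbol{v}^{H}\boldsymbol{A}(\lambda,\{\mu_{k}\})\boldsymbol{v}\geq0$ at the minimizing unit eigenvector, whose negation is exactly the active affine function $\sum_{k}\mu_{k}\boldsymbol{v}^{H}\boldsymbol{H}_{k}\boldsymbol{v}-\lambda$ in your Rayleigh--Ritz supremum. Your invocation of the max-rule for sub-differentials simply makes rigorous what the paper leaves implicit, so the two proofs coincide in substance.
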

With the sub-gradients of the objective function and all the constraint
functions obtained, we can efficiently implement the ellipsoid method
to obtain the optimal dual solution $\lambda^{\mathrm{opt}}$ and
$\{\mu_{k}^{\mathrm{opt}}\}$. 

Notice that at the optimal dual solution $\lambda^{\mathrm{opt}}$
and $\{\mu_{k}^{\mathrm{opt}}\}$, it must follow that $\lambda^{\mathrm{opt}}>0$
and $\boldsymbol{A}(\lambda^{\mathrm{opt}},\{\mu_{k}^{\mathrm{opt}}\})$
is of full rank (i.e., $\mathrm{rank}(\boldsymbol{A}(\lambda^{\mathrm{opt}},\{\mu_{k}^{\mathrm{opt}}\}))=N_{t}$),
since otherwise, the maximum transmit power constraint in \eqref{eq:Power constraint}
cannot be satisfied. Then, Proposition 1 follows directly from Lemma
3. This completes the proof.\vspace{-0.5cm}

\section{Proof of Lemma 3}

First, we have $\mathrm{tr}(\boldsymbol{A}(\lambda,\{\mu_{k}\})\boldsymbol{S}_{x})=\mathrm{tr}(\boldsymbol{\Lambda}\boldsymbol{U}^{H}\boldsymbol{S}_{x}\boldsymbol{U})$.
Let $\tilde{\boldsymbol{S}_{x}}=\boldsymbol{U}^{H}\boldsymbol{S}_{x}\boldsymbol{U}$.
It is easy to figure out that $\mathrm{tr}(\boldsymbol{S}_{x}^{-1})=\mathrm{tr}(\tilde{\boldsymbol{S}_{x}}^{-1})$.
Recall that $\boldsymbol{S}_{x}$ is positive semi-definite. We denote
$(\tau_{1},\dots,\tau_{N_{t}})$ as the diagonal entries of $\tilde{\boldsymbol{S}_{x}}$
to be determined. Note that $\mathrm{tr}(\boldsymbol{A}(\lambda,\{\mu_{k}\})\boldsymbol{S}_{x})=\sum_{i=1}^{N}\alpha_{i}\tau_{i}$.
Here, we introduce the following lemma to find the minimum of $\mathrm{tr}(\tilde{\boldsymbol{S}_{x}}^{-1})$
w.r.t. $(\tau_{1},\dots,\tau_{N_{t}})$, for which the proof can be
found in \cite[Appendix A]{ohno2004capacity} and thus is omitted.
\begin{lem}
\textup{\cite{ohno2004capacity} For a positive semi-definite matrix
$\boldsymbol{B}_{0}\in\mathbb{C}^{M\times M}$, with $(m,n)$-th entry
$a(m,n),$ it holds that}\vspace{-0.2cm}
\textup{
\begin{equation}
\mathrm{tr}(\boldsymbol{B}_{0}^{-1})\geq\sum_{i=1}^{M}\frac{1}{a(i,i)},
\end{equation}
}\vspace{-0.2cm}
\textup{where the equality holds if and only if $\boldsymbol{B}_{0}$
is diagonal.}\vspace{-0.2cm}
 
\end{lem}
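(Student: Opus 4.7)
The plan is to establish the inequality termwise on the diagonal of $\boldsymbol{B}_{0}^{-1}$ and then sum. Since $\boldsymbol{B}_{0}^{-1}$ appears in the statement, $\boldsymbol{B}_{0}$ must actually be positive definite (not merely semi-definite), so each $a(i,i)=\boldsymbol{e}_{i}^{H}\boldsymbol{B}_{0}\boldsymbol{e}_{i}>0$ and both the principal square root $\boldsymbol{B}_{0}^{1/2}$ and its inverse $\boldsymbol{B}_{0}^{-1/2}$ are well-defined positive definite matrices. This regularity is the only housekeeping needed before invoking the main tool.

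First, I would prove the scalar bound $[\boldsymbol{B}_{0}^{-1}]_{ii}\geq 1/a(i,i)$ for every $i\in\{1,\ldots,M\}$. The key identity is $1=\boldsymbol{e}_{i}^{H}\boldsymbol{e}_{i}=(\boldsymbol{B}_{0}^{1/2}\boldsymbol{e}_{i})^{H}(\boldsymbol{B}_{0}^{-1/2}\boldsymbol{e}_{i})$, to which the Cauchy-Schwarz inequality applies:
\[
1=\bigl|(\boldsymbol{B}_{0}^{1/2}\boldsymbol{e}_{i})^{H}(\boldsymbol{B}_{0}^{-1/2}\boldsymbol{e}_{i})\bigr|^{2}\leq\|\boldsymbol{B}_{0}^{1/2}\boldsymbol{e}_{i}\|^{2}\cdot\|\boldsymbol{B}_{0}^{-1/2}\boldsymbol{e}_{i}\|^{2}=a(i,i)\cdot[\boldsymbol{B}_{0}^{-1}]_{ii},
\]
which rearranges to the claimed entrywise bound. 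Summing over $i=1,\ldots,M$ then yields
\[
\mathrm{tr}(\boldsymbol{B}_{0}^{-1})=\sum_{i=1}^{M}[\boldsymbol{B}_{0}^{-1}]_{ii}\geq\sum_{i=1}^{M}\frac{1}{a(i,i)},
\]
establishing the inequality itself.

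Second, I would characterize the equality case. The Cauchy-Schwarz step at index $i$ is tight precisely when $\boldsymbol{B}_{0}^{-1/2}\boldsymbol{e}_{i}$ is a scalar multiple of $\boldsymbol{B}_{0}^{1/2}\boldsymbol{e}_{i}$, equivalently when $\boldsymbol{B}_{0}\boldsymbol{e}_{i}=c_{i}\boldsymbol{e}_{i}$ for some scalar $c_{i}>0$. Overall equality in the trace inequality demands tightness for \emph{every} $i$, which says that each standard basis vector is an eigenvector of $\boldsymbol{B}_{0}$; this in turn is equivalent to $\boldsymbol{B}_{0}$ being a diagonal matrix. The converse direction is trivial: if $\boldsymbol{B}_{0}=\mathrm{diag}(a(1,1),\ldots,a(M,M))$, then $[\boldsymbol{B}_{0}^{-1}]_{ii}=1/a(i,i)$ and the trace equality holds by direct summation.

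There is no substantive obstacle here; the proof is a one-line application of Cauchy-Schwarz applied once per diagonal index. The only point requiring a modicum of care is the equality analysis, where one must argue that simultaneous tightness across all $i$ forces every $\boldsymbol{e}_{i}$ to be an eigenvector of $\boldsymbol{B}_{0}$, hence $\boldsymbol{B}_{0}$ is diagonal. Since the reference \cite{ohno2004capacity} carries this out in detail, a sketch along these lines suffices to justify invoking the lemma in the proof of Lemma~3.
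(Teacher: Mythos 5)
Your proof is correct. Note, however, that the paper does not actually prove this lemma: it is quoted from \cite[Appendix A]{ohno2004capacity} with the proof explicitly omitted, so there is no in-paper argument to compare against. Your Cauchy--Schwarz derivation is a clean, self-contained substitute: the termwise bound $[\boldsymbol{B}_{0}^{-1}]_{ii}\,a(i,i)\ge\big|(\boldsymbol{B}_{0}^{1/2}\boldsymbol{e}_{i})^{H}(\boldsymbol{B}_{0}^{-1/2}\boldsymbol{e}_{i})\big|^{2}=1$ is valid once one notes (as you do) that the appearance of $\boldsymbol{B}_{0}^{-1}$ forces $\boldsymbol{B}_{0}$ to be positive definite, and the equality analysis correctly reduces simultaneous tightness to every $\boldsymbol{e}_{i}$ being an eigenvector of $\boldsymbol{B}_{0}$, i.e.\ diagonality. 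For reference, the other standard route is via the block-inverse/Schur-complement identity $[\boldsymbol{B}_{0}^{-1}]_{ii}=\big(a(i,i)-\boldsymbol{b}_{i}^{H}\boldsymbol{B}_{\bar{i}}^{-1}\boldsymbol{b}_{i}\big)^{-1}\ge 1/a(i,i)$, where $\boldsymbol{b}_{i}$ collects the off-diagonal entries of column $i$ and $\boldsymbol{B}_{\bar{i}}$ is the corresponding principal submatrix; this gives equality at index $i$ iff $\boldsymbol{b}_{i}=\boldsymbol{0}$, hence the same diagonality characterization. Either argument fully justifies the lemma's use in the proof of Lemma~3.
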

Hence, $\tilde{\boldsymbol{S}_{x}}$ must be diagonal and we obtain
\vspace{-0.2cm}
\begin{eqnarray}
\mathrm{tr}(\boldsymbol{A}(\lambda,\{\mu_{k}\})\boldsymbol{S}_{x})+\mathrm{tr}(\boldsymbol{S}_{x}^{-1}) & = & \mathrm{tr}(\boldsymbol{\Lambda}\tilde{\boldsymbol{S}_{x}})+\mathrm{tr}(\tilde{\boldsymbol{S}_{x}}^{-1})=\sum_{i=1}^{N}\alpha_{i}\tau_{i}+\sum_{i=1}^{N_{t}}\frac{1}{\tau_{i}}.
\end{eqnarray}
\vspace{-0.2cm}
In this case, when $N<N_{t},$ for $i>N$, $\tau_{i}$ must approach
infinity to achieve the minimum value 0, i.e., for $i>N$, $\tau_{i}\rightarrow+\infty$.
As for $i\leq N$, by checking the first order differentiation, we
have $\tau_{i}=\alpha_{i}^{-1/2}$. Then, we obtain $\boldsymbol{S}_{x}^{*}$
as\vspace{-0.5cm}
\begin{equation}
\boldsymbol{S}_{x}^{*}=\boldsymbol{U}{\boldsymbol{\Sigma}}\boldsymbol{U}^{H},
\end{equation}
where ${\boldsymbol{\Sigma}}=\mathrm{diag}(\tau_{1},\dots,\tau_{N_{t}})$
with $\tau_{i}\rightarrow+\infty$, for $i>N$, and $\tau_{i}=\alpha_{i}^{-1/2},$
for $i\leq N$.

When $\boldsymbol{A}(\lambda,\{\mu_{k}\})$ is full-rank, we have\vspace{-0.5cm}
\begin{eqnarray}
\mathrm{tr}(\boldsymbol{A}(\lambda,\{\mu_{k}\})\boldsymbol{S}_{x})+\mathrm{tr}(\boldsymbol{S}_{x}^{-1}) & = & \mathrm{tr}(\boldsymbol{\Lambda}\tilde{\boldsymbol{S}_{x}})+\mathrm{tr}(\tilde{\boldsymbol{S}_{x}}^{-1})=\sum_{i=1}^{N_{t}}\alpha_{i}\tau_{i}+\sum_{i=1}^{N_{t}}\frac{1}{\tau_{i}}.
\end{eqnarray}
For any $i\leq N_{t}$, we need to set $\tau_{i}=\alpha_{i}^{-1/2}$
to achieve the minimum value. Then, the $\boldsymbol{S}_{x}^{*}$
is given by $\boldsymbol{S}_{x}^{*}=\boldsymbol{U}^{H}{\boldsymbol{\Sigma}}\boldsymbol{U},$
where $\tilde{\boldsymbol{\Lambda}}=\mathrm{diag}(\tau_{1},\dots,\tau_{N_{t}})$
with $\tau_{i}=\alpha_{i}^{-1/2},$ for $i\leq N_{t}$.

{\footnotesize{}{} \bibliographystyle{IEEEtran}
\bibliography{IEEEabrv,IEEEexample,my_ref}

\begin{thebibliography}{10}
\providecommand{\url}[1]{#1}
\csname url@samestyle\endcsname
\providecommand{\newblock}{\relax}
\providecommand{\bibinfo}[2]{#2}
\providecommand{\BIBentrySTDinterwordspacing}{\spaceskip=0pt\relax}
\providecommand{\BIBentryALTinterwordstretchfactor}{4}
\providecommand{\BIBentryALTinterwordspacing}{\spaceskip=\fontdimen2\font plus
\BIBentryALTinterwordstretchfactor\fontdimen3\font minus
  \fontdimen4\font\relax}
\providecommand{\BIBforeignlanguage}[2]{{%
\expandafter\ifx\csname l@#1\endcsname\relax
\typeout{** WARNING: IEEEtran.bst: No hyphenation pattern has been}%
\typeout{** loaded for the language `#1'. Using the pattern for}%
\typeout{** the default language instead.}%
\else
\language=\csname l@#1\endcsname
\fi
#2}}
\providecommand{\BIBdecl}{\relax}
\BIBdecl

\bibitem{ren2022fundamental}
Z.~Ren, X.~Song, Y.~Fang, L.~Qiu, and J.~Xu, ``Fundamental {CRB-rate} tradeoff
  in multi-antenna multicast channel with {ISAC},'' in \emph{IEEE Global
  Commun. Conf. (GLOBECOM) workshop}, 2022, pp. 1261--1266.

\bibitem{liu2021integrated}
F.~Liu, Y.~Cui, C.~Masouros, J.~Xu, T.~X. Han, Y.~C. Eldar, and S.~Buzzi,
  ``Integrated sensing and communications: {Towards} dual-functional wireless
  networks for {6G} and beyond,'' \emph{IEEE J. Sel. Areas Commun.}, vol.~40,
  no.~6, pp. 1728--1767, Jun. 2022.

\bibitem{LiuMasJ18}
F.~Liu, C.~Masouros, A.~Li, H.~Sun, and L.~Hanzo, ``{MU-MIMO} communications
  with {MIMO} radar: {From} co-existence to joint transmission,'' \emph{{IEEE}
  Trans. Wireless Commun.}, vol.~17, no.~4, pp. 2755--2770, Apr. 2018.

\bibitem{cui2021integrating}
Y.~Cui, F.~Liu, X.~Jing, and J.~Mu, ``Integrating sensing and communications
  for ubiquitous {IoT}: Applications, trends, and challenges,'' \emph{IEEE
  Netw.}, vol.~35, no.~5, pp. 158--167, Sep. 2021.

\bibitem{RahLusJ20}
M.~L. Rahman, J.~A. Zhang, X.~Huang, Y.~J. Guo, and R.~W. Heath, ``Framework
  for a perceptive mobile network using joint communication and radar
  sensing,'' \emph{{IEEE} Trans. Aerosp. Electron. Syst.}, vol.~56, no.~3, pp.
  1926--1941, Jun. 2020.

\bibitem{huang2022coordinated}
Y.~Huang, Y.~Fang, X.~Li, and J.~Xu, ``Coordinated power control for network
  integrated sensing and communication,'' \emph{IEEE Trans. Veh. Technol.}, pp.
  1--6, Dec. 2022.

\bibitem{song2022intelligent}
X.~Song, J.~Xu, F.~Liu, T.~X. Han, and Y.~C. Eldar, ``Intelligent reflecting
  surface enabled sensing: {Cram{\'e}r--Rao} lower bound optimization,''
  \emph{{IEEE} Trans. Signal Process.}, vol.~71, pp. 2011--2026, May. 2023.

\bibitem{sankar2022beamforming}
R.~Sankar, S.~P. Chepuri, and Y.~C. Eldar, ``Beamforming in integrated sensing
  and communication systems with reconfigurable intelligent surfaces,''
  \emph{arXiv preprint arXiv:2206.07679}, 2022.

\bibitem{heath2018foundations}
R.~W. Heath and A.~Lozano, \emph{Foundations of {MIMO Communication}}.\hskip
  1em plus 0.5em minus 0.4em\relax Cambridge University Press, 2018.

\bibitem{haimovich2007mimo}
A.~M. Haimovich, R.~S. Blum, and L.~J. Cimini, ``{MIMO} radar with widely
  separated antennas,'' \emph{IEEE Signal Process. Mag.}, vol.~25, no.~1, pp.
  116--129, Dec. 2007.

\bibitem{LiStoJ07}
J.~Li and P.~Stoica, ``{MIMO} radar with colocated antennas,'' \emph{IEEE
  Signal Process. Mag.}, vol.~24, no.~5, pp. 106--114, Sep. 2007.

\bibitem{liu2022survey}
A.~Liu, Z.~Huang, M.~Li, Y.~Wan, W.~Li, T.~X. Han, C.~Liu, R.~Du, D.~K.~P. Tan,
  J.~Lu, and Y.~Shen, ``A survey on fundamental limits of integrated sensing
  and communication,'' \emph{IEEE Commun. Surveys Tuts.}, vol.~24, no.~2, pp.
  994--1034, 2nd Quart. 2022.

\bibitem{li2022framework}
J.~Li, G.~Zhou, T.~Gong, and N.~Liu, ``A framework for mutual information-based
  {MIMO} integrated sensing and communication beamforming design,'' \emph{arXiv
  preprint arXiv:2211.07887}, 2022.

\bibitem{zhang2022accelerating}
T.~Zhang, S.~Wang, G.~Li, F.~Liu, G.~Zhu, and R.~Wang, ``Accelerating edge
  intelligence via integrated sensing and communication,'' in \emph{IEEE Int.
  Conf. Commun. (ICC)}, 2022, pp. 1586--1592.

\bibitem{LiuHuangNirJ20}
X.~Liu, T.~Huang, N.~Shlezinger, Y.~Liu, J.~Zhou, and Y.~C. Eldar, ``Joint
  transmit beamforming for multiuser {MIMO} communications and {MIMO} radar,''
  \emph{{IEEE} Trans. Signal Process.}, vol.~68, pp. 3929--3944, Jun. 2020.

\bibitem{liu2021cramer}
F.~Liu, Y.-F. Liu, A.~Li, C.~Masouros, and Y.~C. Eldar, ``{Cram{\'e}r-Rao}
  bound optimization for joint radar-communication beamforming,'' \emph{{IEEE}
  Trans. Signal Process.}, vol.~70, pp. 240--253, Dec. 2021.

\bibitem{LiuZhouJ18}
F.~Liu, L.~Zhou, C.~Masouros, A.~Li, W.~Luo, and A.~Petropulu, ``Toward
  dual-functional radar-communication systems: Optimal waveform design,''
  \emph{{IEEE} Trans. Signal Process.}, vol.~66, no.~16, pp. 4264--4279, Aug.
  2018.

\bibitem{hua2021optimal}
H.~Hua, J.~Xu, and T.~X. Han, ``Optimal transmit beamforming for integrated
  sensing and communication,'' \emph{{IEEE} Trans. Veh. Technol.}, early
  access, Mar. 2023, doi: {\color{blue}{10.1109/TVT.2023.3262513}}.

\bibitem{NOMAISAC}
Z.~Wang, Y.~Liu, X.~Mu, Z.~Ding, and O.~A. Dobre, ``{NOMA} empowered integrated
  sensing and communication,'' \emph{IEEE Commun. lett.}, vol.~26, no.~3, pp.
  677--681, Mar. 2022.

\bibitem{xiong2022flowing}
Y.~Xiong, F.~Liu, Y.~Cui, W.~Yuan, T.~X. Han, and G.~Caire, ``On the
  fundamental tradeoff of integrated sensing and communications under
  {Gaussian} channels,'' \emph{{IEEE} Trans. Inf. Theory}, early access, Jun.
  2023, doi: {\color{blue}{10.1109/TIT.2023.3284449}}.

\bibitem{Haocheng2022}
H.~Hua, T.~X. Han, and J.~Xu, ``{MIMO} integrated sensing and communication:
  {CRB-rate} tradeoff,'' \emph{arXiv preprint arXiv:2209.12721}, 2022.

\bibitem{yin2022rate}
L.~Yin, Y.~Mao, O.~Dizdar, and B.~Clerckx, ``Rate-splitting multiple access for
  {6G}--{Part II}: Interplay with integrated sensing and communications,''
  \emph{{IEEE} Commun. Lett.}, vol.~26, no.~10, pp. 2237--2241, Jul. 2022.

\bibitem{jindal2006capacity}
N.~Jindal and Z.-Q. Luo, ``Capacity limits of multiple antenna multicast,'' in
  \emph{Proc. IEEE ISIT.}, 2006, pp. 1841--1845.

\bibitem{sidiropoulos2006transmit}
N.~D. Sidiropoulos, T.~N. Davidson, and Z.-Q. Luo, ``Transmit beamforming for
  physical-layer multicasting,'' \emph{{IEEE} Trans. Signal Process.}, vol.~54,
  no.~6, pp. 2239--2251, Jun. 2006.

\bibitem{weingarten2006capacity}
H.~Weingarten, Y.~Steinberg, and S.~S. Shamai, ``The capacity region of the
  {Gaussian} multiple-input multiple-output broadcast channel,'' \emph{{IEEE}
  Trans. Inf. Theory}, vol.~52, no.~9, pp. 3936--3964, Aug. 2006.

\bibitem{gershman2010convex}
A.~B. Gershman, N.~D. Sidiropoulos, S.~Shahbazpanahi, M.~Bengtsson, and
  B.~Ottersten, ``Convex optimization-based beamforming,'' \emph{IEEE Signal
  Process. Mag.}, vol.~27, no.~3, pp. 62--75, May 2010.

\bibitem{schubert2004solution}
M.~Schubert and H.~Boche, ``Solution of the multiuser downlink beamforming
  problem with individual {SINR} constraints,'' \emph{{IEEE} Trans. Veh.
  Technol.}, vol.~53, no.~1, pp. 18--28, Jan. 2004.

\bibitem{li2007range}
J.~Li, L.~Xu, P.~Stoica, K.~W. Forsythe, and D.~W. Bliss, ``Range compression
  and waveform optimization for {MIMO} radar: {A Cram{\'e}r--Rao} bound based
  study,'' \emph{{IEEE} Trans. Signal Process.}, vol.~56, no.~1, pp. 218--232,
  Jan. 2007.

\bibitem{zeng2018joint}
Y.~Zeng, Y.~Ma, and S.~Sun, ``Joint radar-communication: {Low} complexity
  algorithm and self-interference cancellation,'' in \emph{Proc. IEEE Global
  Commun. Conf. (GLOBECOM)}, 2018, pp. 1--7.

\bibitem{LiuMasJ202}
F.~Liu, C.~Masouros, T.~Ratnarajah, and A.~Petropulu, ``On range sidelobe
  reduction for dual-functional radar-communication waveforms,'' \emph{IEEE
  Wireless Commun. Letters}, vol.~9, no.~9, pp. 1572--1576, Sep. 2020.

\bibitem{huang2022joint}
Z.~Huang, K.~Wang, A.~Liu, Y.~Cai, R.~Du, and T.~X. Han, ``Joint pilot
  optimization, target detection and channel estimation for integrated sensing
  and communication systems,'' \emph{arXiv preprint arXiv:2202.02688}, 2022.

\bibitem{LiuMasJ20}
F.~Liu, C.~Masouros, A.~P. Petropulu, H.~Griffiths, and L.~Hanzo, ``Joint radar
  and communication design: {Applications}, state-of-the-art, and the road
  ahead,'' \emph{{IEEE} Trans. Commun.}, vol.~68, no.~6, pp. 3834--3862, Feb.
  2020.

\bibitem{StoPETLiJ07}
P.~Stoica, J.~Li, and Y.~Xie, ``On probing signal design for {MIMO} radar,''
  \emph{{IEEE} Trans. Signal Process.}, vol.~55, no.~8, pp. 4151--4161, Aug.
  2007.

\bibitem{schmidt1986multiple}
R.~Schmidt, ``Multiple emitter location and signal parameter estimation,''
  \emph{{IEEE} Trans. Antennas Propag.}, vol.~34, no.~3, pp. 276--280, Mar.
  1986.

\bibitem{kay1993fundamentals}
S.~M. Kay, \emph{Fundamentals of {Statistical Signal Processing: Estimation
  Theory}}.\hskip 1em plus 0.5em minus 0.4em\relax Prentice-Hall, Inc., 1993.

\bibitem{boyd2004convex}
S.~Boyd and L.~Vandenberghe, \emph{Convex {O}ptimization}.\hskip 1em plus 0.5em
  minus 0.4em\relax Cambridge {U}niversity {P}ress, 2004.

\bibitem{horn2012matrix}
R.~A. Horn and C.~R. Johnson, \emph{Matrix analysis}.\hskip 1em plus 0.5em
  minus 0.4em\relax Cambridge university press, 2012.

\bibitem{cvx}
\BIBentryALTinterwordspacing
M.~Grant and S.~Boyd, ``{CVX}: Matlab software for disciplined convex
  programming, version 2.1,'' Mar. 2014. [Online]. Available:
  \url{http://cvxr.com/cvx}
\BIBentrySTDinterwordspacing

\bibitem{razaviyayn2014successive}
M.~Razaviyayn, ``Successive convex approximation: {Analysis} and
  applications,'' Ph.D. dissertation, University of Minnesota, 2014.

\bibitem{WangAntJ14}
K.-Y. Wang, A.~M.-C. So, T.-H. Chang, W.-K. Ma, and C.-Y. Chi, ``Outage
  constrained robust transmit optimization for multiuser {MISO} downlinks:
  Tractable approximations by conic optimization,'' \emph{{IEEE} Trans. Signal
  Process.}, vol.~62, no.~21, pp. 5690--5705, Nov. 2014.

\bibitem{xu2008target}
L.~Xu, J.~Li, and P.~Stoica, ``Target detection and parameter estimation for
  {MIMO} radar systems,'' \emph{{IEEE} Trans. Aerosp. Electron. Syst.},
  vol.~44, no.~3, pp. 927--939, Jul. 2008.

\bibitem{boyd2008ellipsoid}
S.~Boyd and C.~Barratt, ``Ellipsoid method,'' \emph{Notes for EE364B, Stanford
  University}, 2008.

\bibitem{ohno2004capacity}
S.~Ohno and G.~B. Giannakis, ``Capacity maximizing {MMSE-optimal} pilots for
  wireless {OFDM} over frequency-selective block {Rayleigh-fading} channels,''
  \emph{{IEEE} Trans. Inf. Theory}, vol.~50, no.~9, pp. 2138--2145, Sep. 2004.

\end{thebibliography}
}{\footnotesize\par}
\end{document}